
\documentclass[]{article}
\usepackage{adjustbox}
\usepackage{booktabs}
 \usepackage{epsfig}
\usepackage{caption}
\usepackage{threeparttable}
\usepackage{subcaption}
\usepackage{epstopdf} 
\usepackage{epsfig}
\usepackage{import}
\usepackage{natbib}
\usepackage{amssymb}
\usepackage{graphicx}
\usepackage{xurl}
\usepackage{float}
\usepackage{blindtext}

\newtheorem{proposition}{Proposition}

\usepackage{lineno,hyperref}
\newcommand{\e}[1]{{\mathbb E}\left[ #1 \right]}
\usepackage[title,toc,titletoc,page]{appendix}

\newenvironment{proof}[1][Proof]{\noindent\textbf{#1.} }{\ \rule{0.5em}{0.5em}}
\usepackage{changepage}
\usepackage{multirow}
\linespread{1.18} 
\bibliographystyle{plainnat}
\usepackage{authblk}
\usepackage{amsfonts}
\usepackage{dsfont}
\usepackage{amsmath,amssymb}
\usepackage{amssymb}
\usepackage{graphicx}
\usepackage{comment}
\usepackage{bbold}

\usepackage[a4paper,total={6in,10in}]{geometry}

\author{Giulia Livieri$^{a,}$\footnote{Corresponding author: G. Livieri, E-mail: g.livieri@lse.ac.uk}  \qquad Davide Radi$^b$ \qquad Elia Smaniotto$^c$ }
\font\myfont=cmr20 
\title{\vspace*{0.8in}\myfont{Pricing Transition Risk with a Jump-Diffusion Credit Risk Model: Evidences from the CDS market}\vspace*{0.2in}}

 

\begin{document}
\maketitle
\thispagestyle{empty} \medskip
\date{ \medskip \medskip $^a$\textit{The London School of Economics and Political Science, London, United Kingdom.}\medskip \\    \indent    
$^b$\textit{Department of Mathematics for Economic, Financial and Actuarial Sciences, Catholic} \\  \indent \indent \medskip 
\textit{University of Sacred Heart, Milan, Italy.} \\    \indent
$^c$\textit{Department of Economics and Management, University of Florence, Florence, Italy.}}
\bigskip
\indent  \medskip

\bigskip
\bigskip
\bigskip
\begin{abstract}
Transition risk can be defined as the business-risk related to the enactment of green policies,
aimed at driving the society towards a sustainable and low-carbon economy.
In particular, the value of certain firms’ assets can be lower because they need to transition to a less carbon-intensive business model.
In this paper we derive formulas for the pricing of defaultable coupon bonds and Credit Default Swaps to empirically demonstrate that a jump-diffusion credit risk model in which the downward jumps in the firm value are due to tighter green laws can capture, at least partially, the transition risk. 
The empirical investigation consists in the model calibration on the CDS term-structure, performing a quantile regression to assess the relationship
between implied prices and a proxy of the transition risk.
Additionally, we show that a model without jumps lacks this property, confirming the jump-like nature of the transition risk.\\
\medskip\\
\noindent \textbf{Keywords:} Climate change, Transition risk, CDS spreads, Sustainable Finance, Credit Risk.\\
\noindent \textbf{JEL classification code:} G32, C32, C21, Q54.
\end{abstract}

\newpage
\section{Introduction}
The European Union (\textrm{EU}) aims to be climate-neutral by 2050 -- an economy with net-zero greenhouse gas emissions. This objective is at the heart of the \textrm{European Green Deal} and in line with the EU's commitment to global climate action under the 2015 \textrm{Paris Climate Agreement}. However, while the transition to low-carbon economies is a policy imperative, the path may entail extensive policy, legal, technology, and market changes. In particular, the path could expose financial markets to the so-called \textit{transition risk}\footnote{Notice that another type of risk cited in climate finance is the so-called \textit{physical-risk}, which reflects the uncertain economic costs and financial losses from tangible climate-related
adverse trends and more severe extreme events}.

Transition risk results from the adjustment towards a low-carbon economy and the possibility that shifts in policies or technologies designed to mitigate and adapt to climate change could affect the value of financial assets and liabilities, disrupting inter-mediation and financial stability. More precisely, transition risk results from changes in climate policy or regulation or technological innovations that cause a decrease in the competitiveness of high-carbon technologies and infrastructures (in turn leading to increased costs, stranded assets, stranded processes, or credit losses). The EU has been gradually reducing its level of \textit{greenhouse gases} (GHG, henceforth) by over 20$\%$ in 2020 since its 1990 levels, mainly thanks to the creation in 2005 of the EU Emissions Trading Scheme (ETS). However, concerns over climate-related financial risks only recently became more prevalent in the financial sector. Therefore, a relevant \textit{research question} regards the ability to account for transition risk by \textit{credit-risk models}.

Traditionally not crafted for capturing transition risk, credit-risk models account for different credit risks drivers, such as equity returns, equity volatility, and macroeconomic factors (often proxied by stochastic risk-free interest rates). More sophisticated credit-risk models are based on \cite{merton1974pricing}-like \textit{jump-diffusion processes} to capture unexpected events that negatively affect equity value. In particular, the jump component is required to replicate the empirically-observed high short-term credit spreads (see, e.g., \cite{Zhou2001}, among others). In addition, it is economically justified by the discounting of unforeseen information of firms' credit quality deriving by the asymmetric information that affects investors; see, e.g., \cite{Giesecke2006}. Since asymmetric information also characterizes the introduction, and the time of introduction, of green regulations, a question arises if jump-diffusion models are suitable to capture transition risk.

Recently, \cite{AgliardiAgliardi2021} propose a model for defaultable \textit{perpetual bonds} incorporating uncertainty due to climate-related risks, in addition to the uncertainty about corporate earnings; the \cite{AgliardiAgliardi2021}'s model is similar to the \cite{hilberink2002optimal} one. In particular, in their specification, the random policy shocks and \textit{degree of greenness} determine downward jumps in the firm value. Compared to these references, our contribution is to consider a version of their credit-risk model with \textit{finite-time maturity} and an \textit{exogenous} default barrier instead of endogenous, and derive a closed-formula for pricing \textit{Credit Default Swap} (CDS, henceforth). Here, we also mention the work of \cite{le2022corporate}, who quantify in the previous models the impact of scenario uncertainty and progressive information discovery on bond prices. 

In particular, we focus on CDS spreads because they offer some advantages with respect to usually used credit risk measures (e.g., ratings). Indeed: (i) They are more reactive to new information arriving in the market than bonds or ratings (see, e.g., \cite{norden2009co}); (ii) They admit standardized contractual characteristics which make them comparable within the corporate universe; (iii) Contrary to bonds, the trading in the CDS market is sufficiently active (see, e.g., \cite{ederington2015bond}); see also the discussion in  \cite{BlasbergKieselTaschini2022}. In particular, CDS spreads are a more precise measure of default risk premia than interest rates spreads, which also include other risk factors such as inflation expectations, exchange rate differences, and heterogeneity in the risk free-rate across countries.

We test the ability of the proposed model to capture transition risk. More precisely, we inspire our empirical investigation from \cite{transitionriskCDSTaschini2021} and \cite{BlasbergKieselTaschini2022}; a more detailed description of their methodology will be given in Section \ref{Sec::4}. They propose a forward-looking proxy for transition risk exposure based on CDS spreads and study how it affects firms' creditworthiness. 
A second driver of transition risk is constructed adopting carbon allowances, tradable government permits that allow firms to emit $\mbox{CO}_2$ in the atmosphere. As quoted contracts, the price of carbon allowances (i.e.,~\textit{carbon price}) should increase when new green policies are enacted, affecting the financial stability of high-pollutant companies. Hence, we aim to link transition risk to the carbon price time series. 

In our investigation, we consider a sample of 84 European companies. Then, we calibrate the model by fitting the market term structure of CDS spreads for every single company and each single trading day in the 5 years time window. Given the very long time horizon of the low-carbon transition, we consider market-implied credit risk for a variety of maturities, namely the 6-month, 1-year, 2-year, 3-year, 4-year, 5-year, 7-year, 10-year, 20-year, and 30-year CDS spread. At this point, we conduct a quantile regression analysis to investigate to which extent the transition risk can explain the spreads between daily variation in theoretical CDS quotations and market-model calibration errors. The quantile regression allows us to provide a more detailed picture of the observed effects with respect to conditional mean models. In addition to the transition risk factor, we include in the regression the most relevant determinants of CDS spreads analyzed in the literature, from firm-specific variables like stock return and volatility, to common variables like the underlying market condition. The investigation, considering as proxy for the transition risk the forward-looking metrics proposed in \cite{transitionriskCDSTaschini2021} and \cite{BlasbergKieselTaschini2022}, confirms the hypothesis that the jump-diffusion model can depict the CDS market stylized facts and, moreover, no significant relationship is measured between calibration errors and transition risk.
Considering the second driver of the transition risk (i.e. \textit{carbon price}), we found discordant results, since it does not follow theoretical expectations about the relationship between transition risk and CDS quotations, thus it is omitted from the final results.
To conclude, we conduct the same analysis using the model without jumps, showing that it is not capable of depicting the stylized facts of the CDS market, confirming the  jump-like nature of the transition risk.
\subsection{Positioning of our Work in the Literature}\label{subsec:literature}
The present work contributes to the literature aiming to establish the capability of a credit-risk jump-diffusion model to capture transition risk from the market. In particular, it stands among works focusing on the investigation of topics such as sustainable finance, climate change, and transition risk.
In addition to the references mentioned so far, we cite the following ones. \cite{delis2018being} observe that banks appeared to start pricing climate policy risk after the Paris Climate Agreement. \cite{jung2018carbon} provide evidences of the existence of a positive association between the cost of debt and carbon related risks for firms. \cite{battiston2017climate} find that while direct exposures to the fossil fuel sector are small, the combined exposures to climate
policy-relevant sectors are large, heterogeneous, and amplified by large indirect exposures via financial counterparties. \cite{ilhan2021carbon} show that, for a sample of S$\&$P 500 companies, higher emissions increase downside risk -- the potential losses that may occur if a particular investment position is taken. \cite{monasterolo2020blind} indicate that investors require higher risk premia for carbon-intensive industries’ equity.

 Within the CDS framework, instead, \cite{barth2022esg} enhance well-established fundamental models with ESG data and shows that the market valuation of environmental performance predominantly drives changes in
CDS spreads. \cite{kolbel2020does} construct textual, forward-looking measures of climate risk exposure and are able to show that transition risk is priced in CDS market. \cite{BlasbergKieselTaschini2022} propose and implement a novel market-based measure of exposure to transition risk and show that the transition risk factor is a relevant determinant of CDS spreads. \cite{vozian2022climate} finds
that firms with higher GHG emissions have higher CDS-implied credit risk, even at the 30-year horizon, employing several proxies of the transition risk.

\subsection{Organization of the Paper}
The rest of the paper is organized as follows. Section \ref{Sec::3} introduces the model and discusses the interpretation of parameters as transition risk drivers. 
Section \ref{Sec::4} lays down the semi-closed formulas to price defaultable bonds and CDSs. Section \ref{Sec::DatasetAndResearchDesign} presents the research question(s), the dataset, the control variables and the methodology. 
Section \ref{sec:empirical} states the empirical investigation and discusses the results. Section \ref{Sec::6} reports the conclusions. \\
The derivation of the pricing formulas is provided in Appendix \ref{AppendixA}. The CDS pricing formula for the no-jump credit risk model is mentioned in Appendix \ref{AppendixB}. The sensitivity analysis of the theoretical credit spread generated by the model is discussed in Appendix \ref{Appendix:theoreticalspread}.
In Appendix \ref{Appendix:Results}, a comprehensive representation of the results of the investigation is proposed. Finally, the descriptive statistics of the dataset is reported in Appendix \ref{Appendix:statistics}.

\section{The Model}\label{Sec::3}
Let us consider a structural credit risk model, where the value (EBIT) of a firm is described by the jump-diffusion stochastic process defined on filtered probability space 
$\left(\Omega,\mathcal{F},\left(\mathcal{F}_{\tau}\right)_{t\leq \tau\leq T},Q\right)$. The dynamic of the firm is given by:\footnote{The notation $V_{t}^{-}$ indicates that, if a jump occurs, the value of the process before the jump is used as a left-side contribution.}
\begin{equation} \label{eq:firmAgliard}
    \frac{dV_t}{V^{-}_{t}} = (r - \lambda \xi) dt + \sigma dW^{Q}_t + d \left( \sum_{i=1}^{N_t} ( S_i -1 ) \right)
\end{equation}
where $r$ is the constant free-risk interest rate, $\sigma$ is the volatility and $W_t$ is a Wiener process defined on the jump-related risk-adjusted measure $Q$. 
The Poisson process $N_t$ has a constant arrival rate $\lambda$ and $S_i$ indicates a sequence of non-negative independent and identically distributed (i.i.d.) random variables. 
The amplitude of the jumps, driven by the random variable $Y_i = \ln(S_i)$, follows an exponential distribution, having density:
\begin{equation}
 f_Y (y) = \eta e^{\eta y}   \mathds{1}_{ \{ y < 0 \} }
\end{equation}
where the parameter $\eta$ models the average jumps magnitude and $\mathds{1}_{ \{ . \}}$ is the indicator function, imposing that the firm can only be affected by downward jumps. As a common assumption, the drift of firm-value stochastic process defined in \eqref{eq:firmAgliard} is adjusted by $\lambda \xi$, where:
\begin{equation}
\xi = \e{e^{Y_{i}} -1}=\frac{\eta}{\eta + 1} - 1 
\end{equation}
 so that the dynamics of $V$ is consistent with the jump-related risk-adjusted measure $Q$.
 
In this framework, based on the \cite{AgliardiAgliardi2021}, the jump component in the process describes the impact of green policies on the value of the firm. As new green laws are released by the regulator, imposing lower carbon emissions and promoting the adoption of sustainable technologies, companies are forced to comply with the new requirements,  incurring sudden costs which can undermine the financial stability of the firm and badly affect the creditworthiness.
The impact of the green policies on the value of the firm depends on the parameter $1/\eta$, describing the average jump magnitude. 
This parameter provides a measure of the costs a company must undertake when new green laws are enacted, indicating the exposure to transition risk. 

Firms' exposure to transition risk may result from several factors, such as a high dependency on carbon emissions or to be reliance on brown business models. For instance, companies that are investing in green projects, and aligning their business line to low-carbon standards, will be less exposed to this risk, thus the impact of new policies on their valuation will be minor. In our model, the parameter $\eta$ is associated with a firm's level of greenness, so that greener firms $( \uparrow \eta)$ will be affected by smaller jumps than brown firms $( \downarrow \eta)$.

The intensity of unexpected events that cause a sudden reduction of the firm's value is defined by $\lambda$, the arrival rate of the Poisson process. It reflects the intensity of the transition risk due to climate-change policies. Standard choices comprise considering these values as a constant and assigning it to an exogenous variable such as the rate of enactment of new green laws or an indicator of the rate of climate change or a macroeconomic variable. In the current framework, the arrival rate is defined constant as, for example, in \cite{AgliardiAgliardi2021}. 

When a company is unable to comply with green laws by adjusting the business model, the firm may go bankrupt. In our model, we define this credit event (default) as the first time the firm-value process breaches (from above) a constant default barrier, denoted by $V_{def}$. Therefore, the default event coincides with the following first-passage time:
\begin{equation}
t_{d} = \inf \left\{ t > 0 : V_{t} \leq V_{def} \right\}, \quad t \in (0,T]
\end{equation}
In the case of bankruptcy, the debt holder is refunded with a percentage (so-called recovery rate) of the face value of the security. The recovery rate is assumed constant and denoted by $\Upsilon$. This type of recovery is known as the recovery of face value, see \cite{DuffieSingleton1999}, \cite{GuoJarrowLin2008} and \cite{GunduzUhrigHomburg2014}.

\section{Pricing Defaultable Coupon Bonds and Credit Default Swaps}\label{Sec::4}
In this section, we present the pricing formulas for a defaultable coupon bond and a CDS derived from the model, described in Section \ref{Sec::3}. The approach, based on \cite{ChenKou2009}, has been already proposed in the literature, in fact our model is a simplified version of the \cite{Koumodel} one. Originally, the model was driven by the same dynamics but equipped with two different jumps, providing an upward and downward component. 
For our purpose, as mentioned in the previous section, we intend to have the downward component only. 
Hence, in this framework, it is possible to derive the first-time passage probability for the model, as described in \cite{KouWang2003}. Once obtained, the pricing formulas can be derived accordingly to \cite{ChenKou2009}, requiring the computations of inverse Laplace transforms. At the end of the section, we also propose an overview of the numerical method to derive the inverse of the Laplace transforms.
\subsection{Preliminary Results}
Before introducing the pricing formulas for a defaultable coupon bond and a CDS, let us state some preliminary results. Employing the notation  $x = \ln(V)$, the default probability in $( t,T ]$ is given by: 
\begin{equation}\label{defaultprob}
\displaystyle D\left(x,t;T\right) =  \mathbb{E}\left[\left. \mathds{1}_{\left\{T\geq t_{d} \geq t\right\}} \right|\mathcal{F}_{t}\right] 
\end{equation}  
Following the approach of \cite{KouWang2003}, which is consistent within our model\footnote{In \cite{KouWang2003}, the Laplace transform of the first time passage model has been derived for a jump-diffusion model with a downward and upward component. For our model, which is included in the mentioned version as it has only one jump component, the Laplace transform (reported in \eqref{eq:laplacefirsttimepass}) can be derived in a similar manner. For the sake of brevity, we do not report the proof.}, we can define the Laplace transform with parameter $\omega > 0$ of the default probability \eqref{defaultprob}, that is $H\left(x,t;\omega\right)=\mathcal{L}\left(D\left(x,t;T\right)\right)\left(\omega\right)$, as:
\begin{equation}\label{eq:laplacefirsttimepass}
\displaystyle H\left(x,t;\omega\right)  = \frac{1}{\omega}\mathbb{E}\left[\left.  e^{-\omega \left(t_{d}-t\right)} \right|\mathcal{F}_{t}\right]
\end{equation} 
where, defining $\hat{x} := \ln\left(V_{t} / V_{def}\right)$, we have:
\begin{equation}\label{eq:laplace}
\mathbb{E}\left[\left.  e^{-\omega\left(t_{d}-t\right)} \right|\mathcal{F}_{t}\right]  = \left[   \frac{\gamma(\beta - \eta)}{\eta(\beta- \gamma)} e^{-\beta \hat{x}}  + \frac{\beta(\eta - \gamma)}{\eta(\beta- \gamma)} e^{-\gamma \hat{x}} \right]
\end{equation}   
The values of $\beta$ and $\gamma$ are the two positive roots of the polynomial $\mbox{G}\left(q\right) = \omega$, where:
\begin{equation}\label{eq:polyniamial}
\mbox{G}\left(q\right) = \frac{1}{2}\sigma^2 q^2 -\psi q + \lambda \left( \frac{\eta}{\eta - q} - 1 \right) 
\end{equation}
In equation \eqref{eq:polyniamial}, $\psi$ is the drift of the log-process of the firm, consistent with the risk-adjusted measure $Q$, defined as:
\begin{equation}\label{defaultprobLF}
\psi = r - \frac{1}{2}\sigma^2 - \lambda \left( \frac{\eta}{\eta + 1} - 1 \right)
\end{equation}

\subsection{Pricing Defautable Coupon Bonds}
The value of a defaultable coupon bond at time $t$ with unit nominal value, paying a continuous
coupon rate $b$, with maturity $T$ and a constant recovery of face value $\Upsilon$, is given by:
\begin{equation}\label{Def::B}
B\left(x,t;T\right) =  \mathbb{E}\left[\left. e^{-r\left(T-t\right)} \mathds{1}_{\left\{t_{d}>T\right\}} + \Upsilon e^{-r\left(t_{d}-t\right)} \mathds{1}_{\left\{T\geq t_{d} \geq t\right\}}  + \displaystyle \int_{t}^{T} b e^{-r\left(z-t\right)} \mathds{1}_{\left\{t_{d} > z \right\}} dz  \right|\mathcal{F}_{t}\right]
\end{equation}
\noindent A closed-form solution of the Laplace transform of \eqref{Def::B} is provided in Proposition \ref{Prop::1}. The proof is in Appendix \ref{AppendixA}.

\medskip

\begin{proposition}\label{Prop::1}
Consider B in \eqref{Def::B}, we have that: 
\begin{equation}\label{eq:InvLapl}
B\left(x,t;T\right) = \mathcal{L}^{-1}\left(F\left(x,t;\omega\right)\right)
\end{equation} 
where $\mathcal{L}^{-1}$ is the inverse operator of the Laplace transform of parameter $\omega > 0$. The Laplace transform of the bond price, denoted by  $F$, is defined as:
\begin{equation} \label{eq:bondprice}
\displaystyle F\left(x,t;\omega\right) = H\left(x,t;\omega+r\right)   \left( \displaystyle \frac{b}{r} - \frac{b(\omega + r)}{ r\omega} + \frac{\Upsilon\left(r+\omega\right)}{\omega} - 1  \right)  + \displaystyle \left( \frac{1}{\omega + r} + \frac{b}{r\omega} - \frac{b}{r(\omega + r)} \right)
\end{equation}
where $H$ is defined in \eqref{eq:laplacefirsttimepass}.
\end{proposition}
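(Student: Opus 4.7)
The plan is to split $B(x,t;T)$ into its three natural pieces---the redemption of the unit principal at maturity, the lump-sum recovery paid on default, and the running coupon stream---compute the Laplace transform of each in the remaining-to-maturity variable $s:=T-t$, and then reassemble. The only tool I need is the identity $\mathbb{E}[e^{-\theta(t_d-t)}\mid\mathcal{F}_t]=\theta\,H(x,t;\theta)$, which is immediate from \eqref{eq:laplacefirsttimepass}, together with the observation that inserting an extra factor $e^{-rs}$ inside an $s$-integral simply shifts the Laplace parameter from $\omega$ to $\omega+r$.

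For the principal leg I write $\mathds{1}_{\{t_d>T\}}=1-\mathds{1}_{\{\tau\le s\}}$ with $\tau:=t_d-t$, interchange expectation and $ds$-integral by Fubini, and obtain $\tfrac{1}{\omega+r}-H(x,t;\omega+r)$. For the recovery leg, pulling the $ds$-integral inside the expectation yields $\tfrac{\Upsilon}{\omega}\mathbb{E}[e^{-(\omega+r)\tau}\mid\mathcal{F}_t]=\tfrac{\Upsilon(\omega+r)}{\omega}H(x,t;\omega+r)$. For the coupon leg I reverse the order of the double integral, collapsing it to $\tfrac{b}{\omega}\int_{0}^{\infty}e^{-(\omega+r)u}\bigl(1-\mathbb{P}(\tau\le u\mid\mathcal{F}_t)\bigr)du=\tfrac{b}{\omega}\bigl[\tfrac{1}{\omega+r}-H(x,t;\omega+r)\bigr]$.

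Summing the three contributions and grouping the coefficient of $H(x,t;\omega+r)$ gives $\bigl(-1+\tfrac{\Upsilon(\omega+r)}{\omega}-\tfrac{b}{\omega}\bigr)H(x,t;\omega+r)+\tfrac{1}{\omega+r}+\tfrac{b}{\omega(\omega+r)}$. The trivial identities $-\tfrac{b}{\omega}=\tfrac{b}{r}-\tfrac{b(\omega+r)}{r\omega}$ and $\tfrac{b}{\omega(\omega+r)}=\tfrac{b}{r\omega}-\tfrac{b}{r(\omega+r)}$ then put this expression in exactly the form \eqref{eq:bondprice}, and inverting the Laplace transform delivers \eqref{eq:InvLapl}. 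I do not anticipate a genuine obstacle: the three integrands are non-negative and exponentially dominated, so every Fubini swap is automatic; the only piece of bookkeeping that requires care is taking each Laplace transform with respect to $s=T-t$ rather than $T$, so as to stay consistent with the normalization of $H$ adopted in \eqref{eq:laplacefirsttimepass}.
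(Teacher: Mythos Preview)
Your proposal is correct and follows essentially the same route as the paper: both proofs split the bond price into the principal, recovery, and coupon components, apply Fubini to each Laplace integral, and then collect terms via the identity $\mathbb{E}[e^{-(\omega+r)(t_d-t)}\mid\mathcal{F}_t]=(\omega+r)H(x,t;\omega+r)$. The only minor difference is in the coupon leg, where the paper first evaluates the inner time integral for fixed $T$ (splitting on $T\wedge t_d$) before transforming, whereas you swap the double integral in one step; your version is slightly more economical but arrives at the identical expression $\tfrac{b}{\omega}\bigl[\tfrac{1}{\omega+r}-H(x,t;\omega+r)\bigr]$.
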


\subsection{Pricing Credit Default Swaps}

A CDS is a swap financial contract that provides protection to the buyer from adverse events (default) of a company. In particular, the buyer agrees to make continuous payments to the protection seller until the credit event occurs or the contract expires.
The paid amount is called CDS spread. If the default occurs before the maturity of the contract, the protection seller refunds the buyer by the loss-given default. 
In the valuation of such a financial contract, we need to separately address the two components, called legs. The leg whose price depends on the current value of the loss-given default is called the protection leg, whereas the leg concerning the spread is called the premium leg.

Let us consider a CDS written on a bond with unitary face value, initial protection time (current time) $t$, and maturity $T$, and let $t_{d}$ denote the (random) time of default. Moreover, let us assume that the CDS spread, which is denote by $\Pi$, is paid continuously. 
Then, at time $t$, the premium and the protection legs can be defined (see, e.g., \cite{ChenKou2009} and \cite{GunduzUhrigHomburg2014}) as follows: 
\begin{equation}\label{PremiumL}
\mbox{PremiumLeg}\left(x,t;T\right) = \mathbb{E}\left[\left. \displaystyle \Pi \int_{t}^{T} \ e^{-r\left(z-t\right)} \mathds{1}_{\left\{t_{d}\geq z\right\}}dz\right|\mathcal{F}_{t}\right]
\end{equation}
and
\begin{equation}\label{ProtectionL}
\mbox{ProtectionLeg}(x,t;T) = \mathbb{E}\left[\left. \displaystyle e^{-r\left(t_{d}-t\right)}\left(1-\Upsilon\right)\mathds{1}_{\left\{T\geq t_{d} \geq t\right\}}\right|\mathcal{F}_{t}\right]  
\end{equation}
respectively. By equating the premium and protection leg and solving for $\Pi$, we obtain the CDS spread:
\begin{equation}\label{ParSpread_g}
\Pi\left(x,t;T\right) = \frac{\left(1-\Upsilon\right) \mathbb{E}\left[\left.  e^{-r \left(t_{d}-t\right)} \mathds{1}_{\left\{ T\geq t_{d} \geq t \right\}} \right|\mathcal{F}_{t}\right] }{ \mathbb{E}\left[\left. \displaystyle \int_{t}^{T} e^{-r\left(z-t\right)} \mathds{1}_{\left\{t_{d} > z \right\}} dz  \right|\mathcal{F}_{t}\right]} 
\end{equation}

\noindent A closed-form solution of the Laplace Transform of \eqref{ParSpread_g} is given in Proposition \ref{Prop::2}. The proof is in Appendix \ref{AppendixA}.

\medskip

\begin{proposition}\label{Prop::2}
Consider $\Pi$ in \eqref{ParSpread_g}, we have that:
\begin{equation}\label{eq:CDSsperadsSAS}
\Pi\left(x,t;T\right)  = \frac{\mathcal{L}^{-1}\left(F_{ProtectionLeg}\left(x,t;\omega\right)\right)}{\mathcal{L}^{-1}\left(F_{PremiumLeg}\left(x,t;\omega\right)\right)}
\end{equation}
where $\mathcal{L}^{-1}$ is the inverse operator of the Laplace transform of parameter $\omega>0$. The Laplace transform of the premium leg (divided by the CDS spread) is given by:
\begin{equation}\label{eq:premium}
 F_{PremiumLeg}\left(x,t;\omega\right) = \displaystyle H\left(x,t;\omega+r\right)\left(\frac{1}{r} - \frac{\left(r+\omega\right)}{r\omega} \right)  + \frac{1}{r\omega}  - \frac{1}{r\left(r+\omega\right)}
 \end{equation}
Moreover, the Laplace transform of the protection leg is defined as:
\begin{equation}\label{FProtectionLeg}
F_{ProtectionLeg}\left(x,t;\omega\right) =  H\left(x,t;\omega+r\right)\frac{\left(1-\Upsilon\right)\left(r+\omega\right)}{\omega}
\end{equation}
where $H$ is defined in \eqref{eq:laplacefirsttimepass}.
\end{proposition}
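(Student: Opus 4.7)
The plan is to compute the Laplace transforms (in the maturity variable $T-t$) of the numerator and the denominator of \eqref{ParSpread_g} separately, and then invoke inversion. It then suffices to show that $\mathrm{ProtectionLeg}(x,t;T)$ and $\mathrm{PremiumLeg}(x,t;T)/\Pi$ are the inverse Laplace transforms of $F_{ProtectionLeg}$ and $F_{PremiumLeg}$ respectively, after which \eqref{eq:CDSsperadsSAS} will follow by division.

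For the protection leg, I would multiply \eqref{ProtectionL} by $e^{-\omega(T-t)}$ and integrate $T$ over $(t,\infty)$. Assuming survival up to $t$, $\mathds{1}_{\{T\geq t_d\geq t\}}=\mathds{1}_{\{t_d\leq T\}}$, and a Fubini--Tonelli swap (the integrand is non-negative and uniformly bounded by one) brings the outer integral inside the expectation, leaving $\int_{t_d}^{\infty}e^{-\omega(T-t)}\,dT=e^{-\omega(t_d-t)}/\omega$. The result is
\begin{equation*}
F_{ProtectionLeg}(x,t;\omega)=\frac{1-\Upsilon}{\omega}\,\mathbb{E}\bigl[e^{-(r+\omega)(t_d-t)}\bigm|\mathcal{F}_t\bigr],
\end{equation*}
which coincides with \eqref{FProtectionLeg} once we identify $\mathbb{E}[e^{-(r+\omega)(t_d-t)}\mid\mathcal{F}_t]=(r+\omega)\,H(x,t;\omega+r)$ from \eqref{eq:laplacefirsttimepass}.

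For the premium leg, I would first rewrite $\mathds{1}_{\{t_d\geq z\}}=1-\mathds{1}_{\{z\geq t_d\geq t\}}$ and push the expectation through the inner $z$-integral, producing $\mathrm{PremiumLeg}/\Pi=\int_t^{T}e^{-r(z-t)}\bigl(1-D(x,t;z)\bigr)\,dz$. Applying the Laplace transform in $T-t$ and swapping the two integrals via Fubini yields a factor $1/\omega$ from the $T$-integration, reducing matters to the Laplace transform in $z-t$ of $1-D(x,t;z)$ evaluated at $\omega+r$. This splits as $\frac{1}{\omega(r+\omega)}-\frac{1}{\omega}H(x,t;\omega+r)$; a one-line algebraic check shows this is identical to \eqref{eq:premium}.

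The only delicate point is the legitimacy of the two Fubini swaps, but since all integrands are non-negative and dominated (by the constant $1$ for the indicator and by $e^{-r(z-t)}$ for the discounted indicator), Tonelli applies directly and no boundary contributions arise. Once the two transform identities are in place, \eqref{eq:CDSsperadsSAS} follows from \eqref{ParSpread_g} by forming the ratio: the numerator is the inverse Laplace transform of $F_{ProtectionLeg}$, the denominator (premium leg per unit spread) is the inverse Laplace transform of $F_{PremiumLeg}$, and $\Pi$ is their quotient.
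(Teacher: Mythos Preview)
Your proposal is correct and follows essentially the same approach as the paper: compute the Laplace transforms of the two legs separately via Fubini swaps and then form the ratio. Your treatment of the premium leg is slightly more streamlined than the paper's (which reaches the same endpoint through a $T\wedge t_d$ decomposition borrowed from the bond-price proof), and your explicit justification of the Tonelli interchanges is a nice addition.
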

In the next subsection, we present the numerical method to compute the inverse Laplace transform.

\subsection{Numerical Inversion of the Laplace transform}
The semi-closed formulas in Propositions \ref{Prop::1} and \ref{Prop::2} for the pricing of defaultable coupon bonds and CDSs, respectively, require the numerical inversion of the Laplace transform. There are several methods we can use to approximate the solution. In this study, we adopt the so-called Gaver-Stehfest algorithm, see, e.g., \cite{Stehfest1970} and \cite{Gaver1966}.
Assume that $F \left( x,t,\omega \right)$ is the Laplace transform of $\Phi (x,t,T)$. According to the Gaver-Stehfest algorithm, the inverse Laplace transform is approximated as follows:
\begin{equation}
    \Phi (x,t,T) \approx \frac{\ln(2)}{T-t}\displaystyle \sum_{k=1}^{2M} \alpha_k^M F \left( x,t,\frac{k \ln(2)}{T-t} \right)
\end{equation}
where
\begin{equation}
    \alpha_k^M = \frac{(-1)^{M+k}}{M!} \sum_{j=\lfloor (k+1)/2 \rfloor}^{\mbox{min}(k,M)}
    j^{M+1} {M\choose j} {2j\choose j} {j\choose (k-j)}
\end{equation}
and $\lfloor \diamond \rfloor$ stands for the integer part of $\diamond$. 

The accuracy of the algorithm depends on the parameter $M$, which has to be chosen correctly in order to avoid rounding errors. We observed via heuristic analysis that it is recommended to choose values of M between $6$ and $8$, as confirmed in the literature, see, e.g., \cite{BallestraPacelliRadi2017}. Indeed, for small values of M, a proper level of accuracy is not always obtained, whereas if the value of $M$ increases, the stability of the algorithm declines as rounding errors can undermine the sum.

In order to validate all the formulas and, therefore, the Gaver-Stehfest algorithm, we study the accuracy of the method by comparing the solutions with an alternative approach. In particular, we computed the survival probability with a finite difference method (FDM). The numerical scheme has been implemented using a sufficiently thin grid, which ensures maximum precision. 
By fixing $M=8$, in all the experiments conducted, we observed that the Gaver-Stehfest algorithm achieves a satisfactory level of accuracy of $10^{-4}$ in comparison with the FDM method.

Moreover, other numerical algorithms to compute the inverse of the Laplace transform have been tested. Specifically, we have also implemented the Bromwich Integral, see, e.g., \cite{CathcartElJahel2006}.  Although it was capable to obtain a proper level of accuracy, the Gaver-Stehfest algorithm was preferred for its computational performance, as it is twice as fast as the Bromwich Integral.

\section{Research Design, Dataset and Methodology}\label{Sec::DatasetAndResearchDesign}
In this section, we outline all the premises before delving into the empirical investigation. First, we explain the research questions and define the hypothesis, distinguishing between the two models we are considering; see Subsection \ref{sec:hypothesis}. Second, we present the dataset employed in the empirical investigation; see Subsection \ref{sec:Data}. Third, we describe the construction of the control variables employed in the analysis; see Subsection \ref{sec:control_variables}. Finally, Subsection \ref{sec:metholodogy} describes the methodology.

\subsection{Background and Hypothesis Definition} \label{sec:hypothesis}
The research design involves studying the model’s capability to capture transition risk. This goal is achieved by calibrating the model daily to market CDS spreads and assessing the implied prices’ relation to transition risk proxies.
First, we consider the empirical evidence regarding the jump-diffusion model, an analysis that aims to confirm the jump-like nature of the transition risk. Then, for comparison, we propose a discussion on a diffusion model. The jump-diffusion model seeks to capture transition risk via its jump component; see Section \ref{Sec::3}. Unlike the diffusion model, it has endorsed more model parameters, resulting in a better capability to fit the term structure of the CDS spreads. In particular, Appendix \ref{Appendix:theoreticalspread} discusses, via sensitivity analysis, theoretical considerations about the model’s
capability to generate different CDS spreads about the firm’s exposure to transition. 
We now enumerate and discuss the four hypotheses we intend to verify.\\ \\
\noindent \textbf{Hp 1.a:} \textit{The jump-diffusion model presents a positive relation between transition risk factor(s) and model CDS spreads.}\\
\noindent Recent studies show the presence of a positive relationship between transition risk indicators and market CDS spreads,
pointing out a quantifiable dependence between each other; see, e.g., \cite{vozian2022climate} and \cite{huij2022carbon}. Our analysis calibrates a pricing model to the market CDS term structure, deriving the implied prices. Therefore, if the model accurately describes the market CDS term structure, we expect that the same analysis performed on jump-diffusion-calibrated prices will produce similar results.
\\ \\
\noindent \textbf{Hp 2.a:} \textit{The jump-diffusion model shows no-significant relation between transition risk factor and calibration errors.}  \\
\noindent Despite the capability of the jump-diffusion model to accurately reproduce market CDS stylized facts, a relevant question we pose is about the pattern of calibration errors. In fact, when we calibrate and estimate the outcomes, the model may underestimate or overestimate the market. 
Being aware of these tendencies, especially in times when transition risk is consistent, is essential guaranteed a prudent adoption of the model. Overall, we expect that calibration errors do not show relevant relations with a proxy of the transition risk, meaning that the jump-diffusion model is able to fit the CDS term structure even in turmoil times.
\\ \\
\textbf{Hp 1.b:} \textit{The diffusion model cannot capture market CDS spreads evidences as the jump-diffusion one.}  
\\ 
\noindent Diffusion models have found extensive applications in describing economic phenomena due to their easy implementation and fast calibration; see, e.g., \cite{Brigo2004CreditDS}. In this paper, we adopt the framework proposed by \cite{merton1974pricing}, a well-known credit risk model commonly considered a benchmark in this kind of topic.    
We aim to verify that this model cannot follow the market CDS stylized fact like the jump-diffusion one, confirming the requirement to design a more complex model to capture transition risk.  
\\ \\
\noindent
\textbf{Hp 2.b:} \textit{The diffusion model shows significant relation between transition risk and calibration errors.}    \\
As in the jump-diffusion framework, we are interested in assessing the model’s capability to reproduce the CDS term-structure, but in this case we expect significant relations between calibration errors and the indicator of transition risk. In fact, it would confirm that a simple diffusion model yields relevant market underestimation or overestimation in acute transition risk period.

\subsection{Data} \label{sec:Data}
We use data belonging to the European markets. In particular, for each firm, we collect the following data:
\begin{enumerate}
    \item CDS spreads.
    \item Firm-specific references and financial data.
    \item Firm-specific GHG data.\footnote{The first definition of greenhouse-gases was stated within the United Nations Framework Convention on Climate Change (1992), an international treaty aiming to combat climate change, by stabilizing emissions in the atmosphere. The formal implementation of these measures arrived with the Kyoto Protocol, in 1997.}
    \item EU carbon price.
    \item Macroeconomic variables. 
\end{enumerate}

\noindent The dataset covers the single-name CDS spreads of companies whose market capitalization is above EUR 250 million. It comprises contracts denominated in Euros and linked to Senior Unsecured Debt. 
The examined period spans five years, starting on January 03, 2017, and ending on December 31, 2021. We collect daily quotes for the entire CDS-term structure, with maturities ranging from 6 months to 30 years. From the initial sample, we do not consider firms where any CDS maturity time series presents missing values or illiquid quotations. To remove outliers, we winsorize the CDS spreads at the 99th percentile.\\ 
\indent We employ firm-specific financial data and credit ratings. We collect daily stock prices and yearly revenues.  
The firm-specific credit ratings are obtained from S\&P’s and Moody’s contributors, choosing Senior Domestic long-term ratings. When the preferred contributors did not provide credit ratings, we don't consider the company in the analysis.\\
\indent Indicators of exposure to transition risk are built employing GHG emissions data and carbon permit allowances future quotations. GHG data comprises firm-specific emissions in each year under study, consisting of Scope 1 and Scope 2 emission categories.
The former gathers emissions that occur directly from sources controlled or owned by an organization, whereas the latter refers to indirect emissions stemming from the company's supply chain. In our analysis, the company is excluded if any firm-specific GHG data is unavailable. The second indicator of transition risk is derived from the carbon price. In particular, we adopt the time series of the futures continuation carbon allowances contract.\footnote{Carbon Price or European Union Emissions Trading System is a scheme to reduce GHG emissions. 
Such policy, which is currently implemented in Europe, establishes a price to carbon emissions by forcing companies to buy contracts (\textit{carbon allowances}) to emit 1 tonne of $\mbox{CO}_2$. As this amount increase, we can expect that the market price of transition risk rises.} Finally, the macroeconomic variable is the Euribor-6m over the 5-year period, which is used as the risk-free interest rate. The final dataset is composed of 84 firms, resulting in 105756 CDS spread. All data has been collected from Refinitiv Eikon database.
\subsection{Control Variables} \label{sec:control_variables}
This study employs several control variables to measure the models' capability to assess transition risk. To select control variables, we build on the recent works of \cite{transitionriskCDSTaschini2021} and \cite{vozian2022climate}, which have addressed similar issues, although by analyzing the impact of the transition risk exclusively from a market perspective. More precisely, our control variables are
\begin{itemize}
    \item[a.]   Stock returns. 
    \item[b.]   Historical stock volatility.
    \item[c.]   Mean Rated Index.
    \item[d.]   Two proxies of the transition risk. 
\end{itemize}
\indent Stock returns are widely consider a proper explanatory variable for the CDS spreads variation, as they are a proxy of the firm's equity value. In the literature, \cite{merton1974pricing} argued that if stock returns increase, the default probability of the company should decrease. Hence, we expect a negative relation between stock returns and CDS spreads. We compute stock returns as $r_{i,t} = \log(E_{i,t}) - \log(E_{i,t-1})$, where $E_{i,t}$ indicates the price of the stock $i$ at day $t$.\\
\indent Historical stock volatility describes the variance of the value of a company. We can expect that, as long as the volatility increases, the perceived credit risk of a company is higher. The relation between CDS spreads and historical volatility should be positive; see, e.g., \cite{GALIL2014271}. We compute the volatility $ \sigma_{i,t}$ (or the variation  $\Delta \sigma_{i,t}$) as the annualized variance of the stock return $r_{i,t}$, on a 255 days rolling window.\\
\indent As the third control variable, we include a factor representing a measure of the market credit risk condition about credit rating. This control variable is the Median Rated Index of \cite{GALIL2014271}, which is defined as the median CDS spread of all the firms belonging to the S\&P's macro categories \{AAA/AA\}, \{A\}, \{BBB\}, \{BB+ or lower\}. We denote this factor as $\mbox{MRI}_{i,t}^m$ (or the variation $\Delta \mbox{MRI}_{i,t}^m)$, where $m$ indicates the maturities of the CDS spread. We mention that, when the S\&P's credit rating was not available, we collect the quotations from Moody’s contributor, being aware of performing the correct conversion. This factor, describing a general condition of credit risk in the market, shows a positive relationship with the CDS spreads; see, e.g., \cite{GALIL2014271}.\\
\indent Finally, we discuss the construction of transition risk proxies. The literature proposes various examples for these factors, mainly based on firm-specific carbon emission data; see, e.g., \cite{vozian2022climate} and \cite{ZHANG2022103286}. This framework lies on studies that have observed a direct relation between a company's exposure to transition risk and emission intensity;
see, e.g., \cite{BOLTON2021517}. Unfortunately, these measures are only available on an annual basis, providing a metric of a company's total emission rate in a given year.
Since our study requires a daily proxy of the transition risk, we cannot use these variables directly, but we need to derive proper daily factors. We use two different indicators of transition risk, which are described below. 
In particular, the first one is borrowed from \cite{transitionriskCDSTaschini2021} and \cite{BlasbergKieselTaschini2022}, whereas the second, as far as we know, is a novel indicator of transition risk. In what follows, in order to lighten the description of the \textit{First Indicator}, we avoid the writing of these references.\\
\indent \textit{First Indicator.}\quad We build a transition risk factor representing the distance between CDS spreads of firms categorized as "green" and "brown". The classification is based on GHG emissions (Scope 1 and Scope 2, normalized by revenues) and credit rating, where low emission-high credit rating belongs the former, while high emitting-low credit rating to the latter.\footnote{Groups are defined based on emission intensity and credit rating. Let $\mbox{ES}_{i,t}$ be the emission intensity and $\mbox{CR}_{i,t}$ be the credit rating, for firm $i=1,...,N$ and time $t = 1,...,T$. The groups are created by dividing firms in quantiles with respect to these metrics, for every $t$. In particular, let $G_{t,j}^m$ be the set of firms at time $t$ where $m$ lies within the $j-th$ quantile. Groups are defined, both for emission intensity and credit rating, as:
\begin{equation}
\begin{array}{lll}
    & \mathcal{G}^{ES}_{t,1} = \{ i | \mbox{ES}_{(i,t)} \leq \mbox{ES}^{1/3}_{t} \}  &  \mathcal{G}^{CR}_{t,1} = \{ i | CR{(i,t)} \leq CR^{1/3}_{t} \}  \\  
    & \mathcal{G}^{ES}_{t,2} = \{ i | \mbox{ES}^{1/3}_{t} < \mbox{ES}_{(i,t)} \leq \mbox{ES}^{2/3}_{t} \}   \quad  & \mathcal{G}^{CR}_{t,2} = \{ i | CR^{1/3}_{t} < CR{(i,t)} \leq CR^{2/3}_{t} \}  \\ 
    & \mathcal{G}^{ES}_{t,3} = \{ i | \mbox{ES}_{(i,t)} > \mbox{ES}^{2/3}_{t} \}   \quad   & \mathcal{G}^{CR}_{t,3} = \{ i | CR{(i,t)} > CR^{2/3}_{t} \}   
\end{array}
\end{equation}
where the apex $1/3$ indicates the subgroup containing the firms with lower emissions $(j=1)$. We note that we compute the groups over the time period under study.\\ Then, combining this preliminary classification, we obtain the final nine groups:
\begin{equation}
    \mathcal{G}_{t;j,k} = \mathcal{G}^{ES}_{t;j} \cap  \mathcal{G}^{CR}_{t;k} \ \ \mbox{where}\ \ j = 1 \dots 3 \, \ \ k = 1 \dots 3
\end{equation}
where green and brown groups are defined as:
\begin{equation}
    \mathcal{G}^{green}_{t} = \{ \mathcal{G}_{t;1 3} \mathcal{G}_{t;1 2} \} \quad \mbox{and} \quad \mathcal{G}^{brown}_{t} = \{ \mathcal{G}_{t;3 1}  \mathcal{G}_{t;3 2} \}
\end{equation}}

\noindent Once the two groups have been defined, we employ two different approaches to define transition risk. The first method consists in estimating this factor as the difference between the median CDS between the two groups:
\begin{equation} \label{eq:TaschiniRegressor_TR_median}
\mbox{TR}_{t}^m = median( \mbox{CDS}^{brown}_t) - median( \mbox{CDS}^{green}_t )   
\end{equation}
\begin{figure}[H] 
	\begin{centering}
		\begin{tabular}{cc}
		        \includegraphics[scale=0.5]{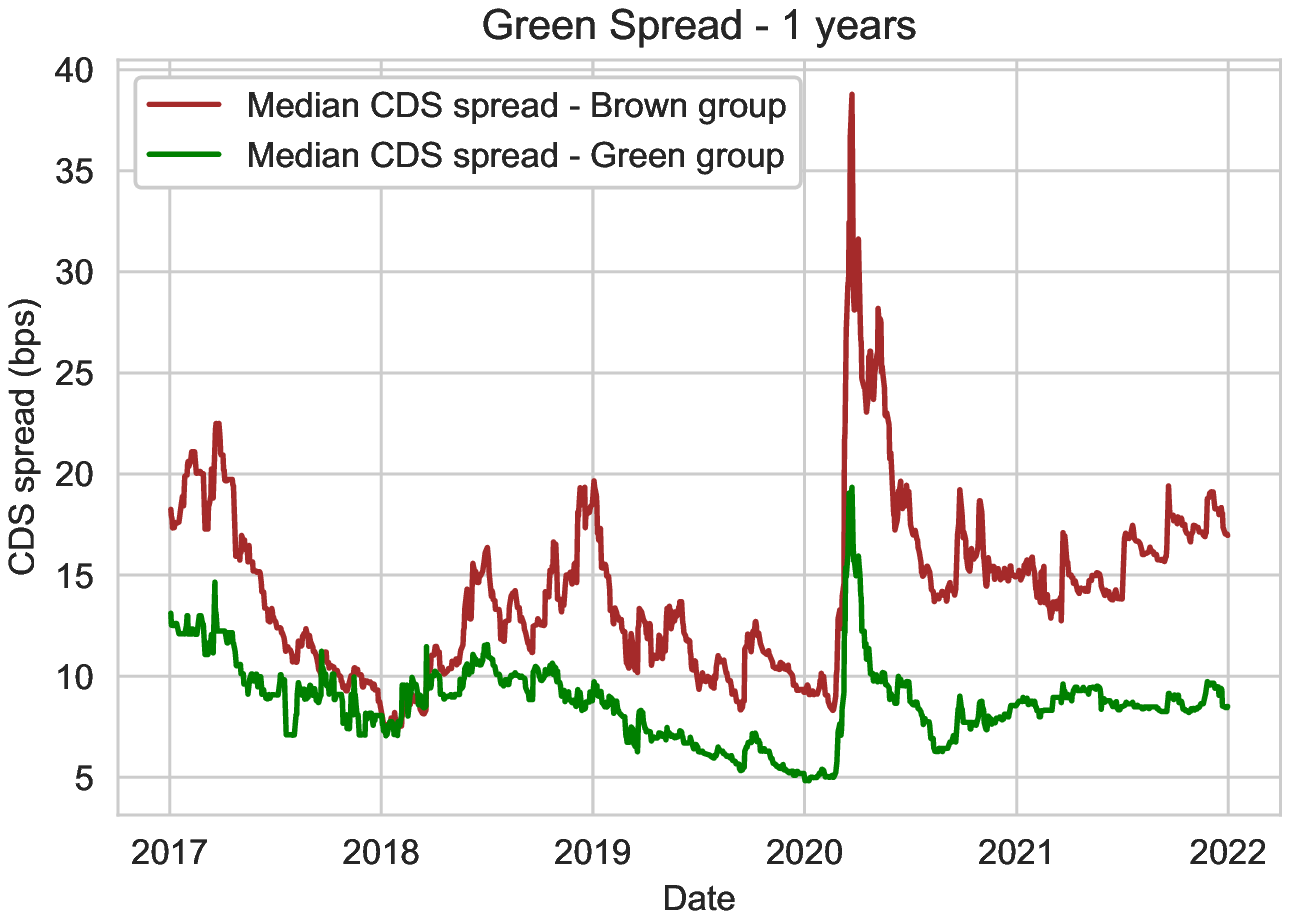} & \includegraphics[scale=0.5]{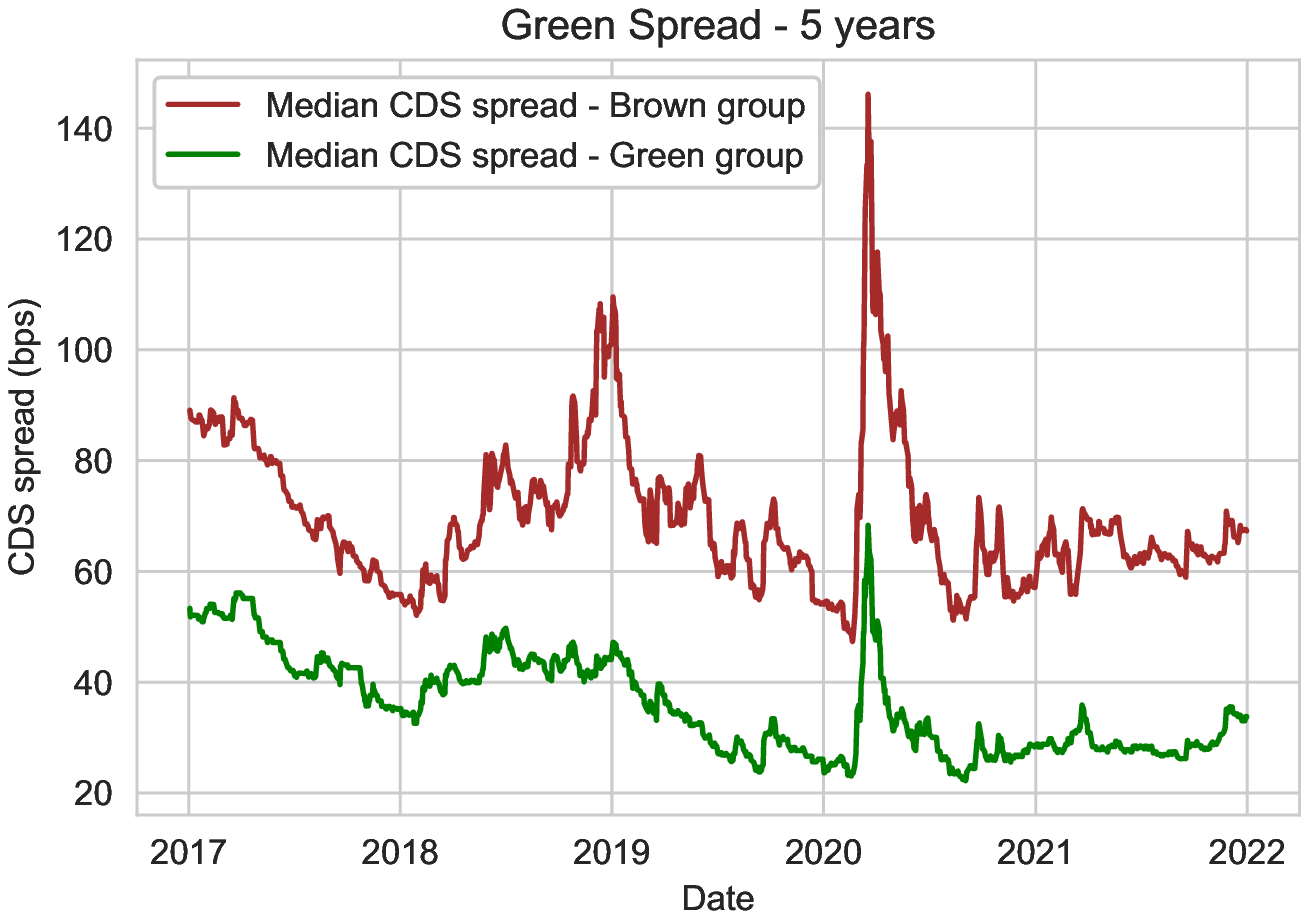}  \tabularnewline[-1ex]
				 \tabularnewline[1ex]
			    \includegraphics[scale=0.5]{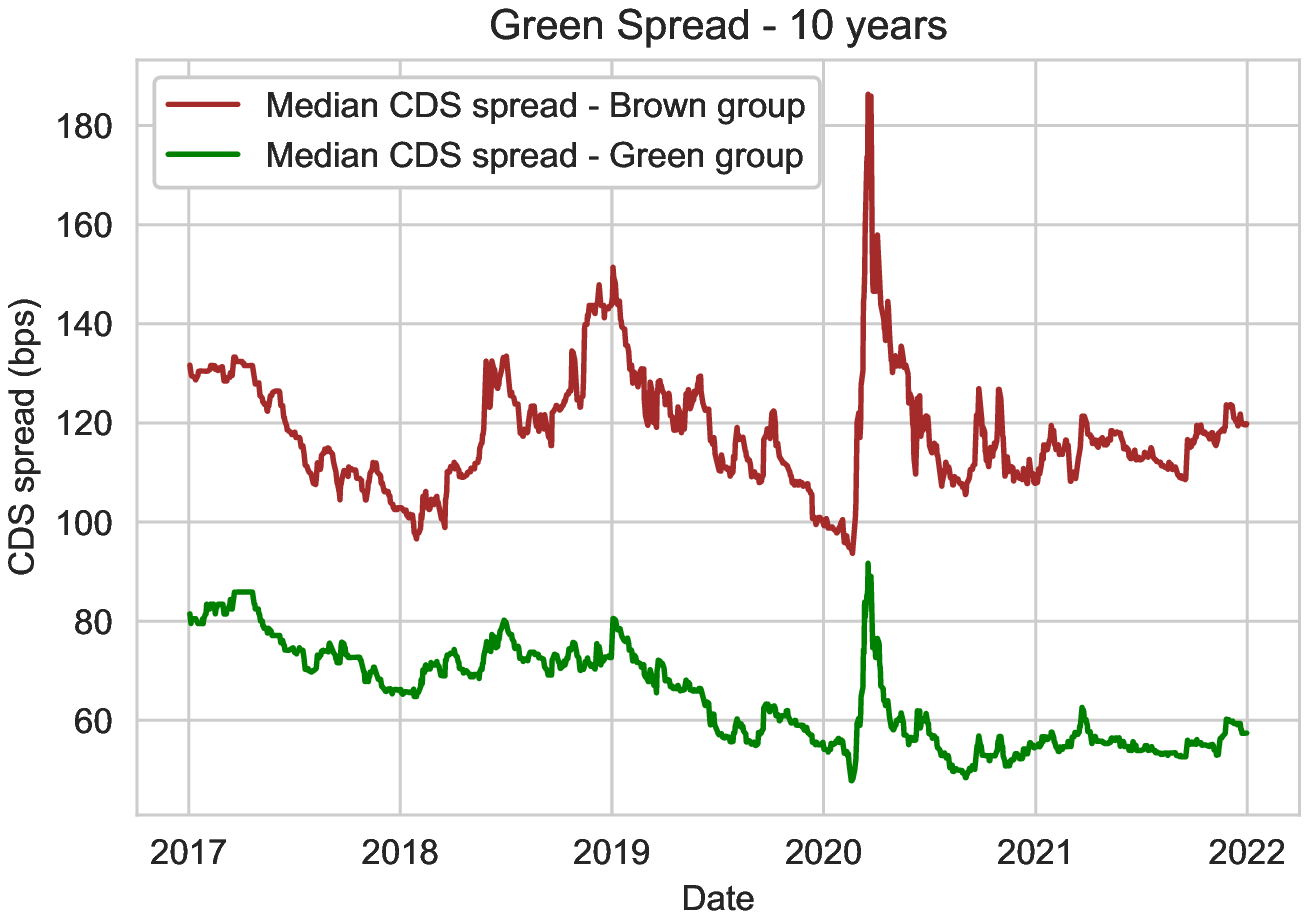} & \includegraphics[scale=0.5]{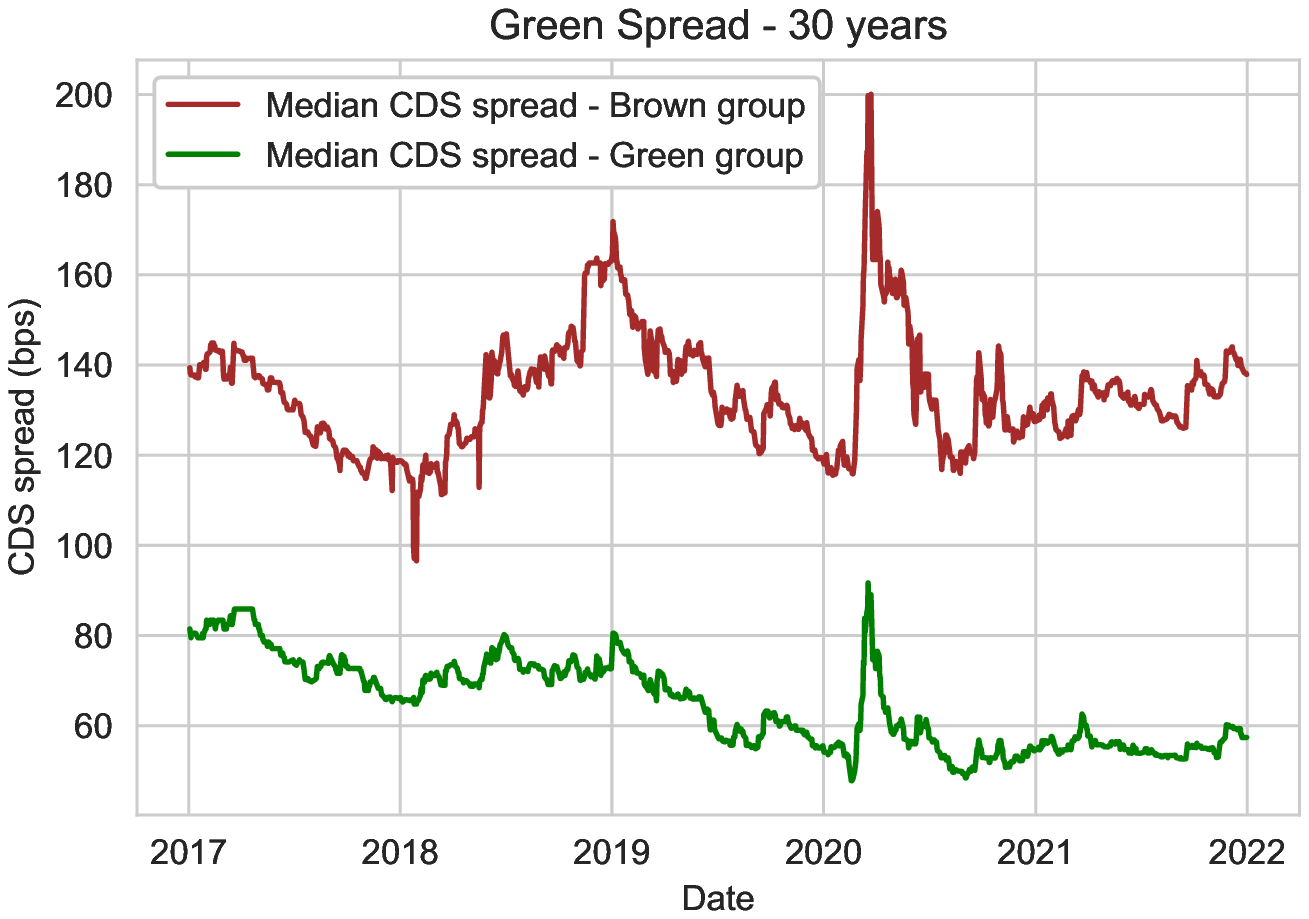} \tabularnewline[-1ex]
		\end{tabular}
	\caption{This figure reports the median CDS spread for companies belonging to green and brown groups at different maturities (1 year, 5 years, 10 years, and 30 years), providing a visual representation of the green spread estimated from the market. }\label{fig:GreenSpread}
	\par\end{centering}
	\centering{}
\end{figure}
Figure \ref{fig:GreenSpread} shows the market CDS spread constructed by the metric presented in \eqref{eq:TaschiniRegressor_TR_median}. We can notice how the difference between green and brown firms is significant and, in particular, the spread increases along the maturities, indicating that the market price of transition risk is higher on the long term. The spike we can spot at the beginning of the year 2020, which is due to the COVID-19 pandemic financial turmoil, affects as expected both green and brown companies.

The second method, instead, is based on the derivation of the Wasserstein distance between two empirical cumulative distribution functions (ECDF). In particular, we compute the ECDF of the CDS spreads of the companies in the green and brown group, where $m$ indicates the maturities, as defined:
\begin{equation}
\begin{array}{cc}
     \hat{F}^{m}_t (x)  = \displaystyle \frac{1}{ \displaystyle | \mathcal{G}^{green}_{t} |} \displaystyle \sum_{i \in \mathcal{G}^{green}_{t}} \mathds{1}_{\left( CDS_{i,t}^m \leq x  \right)}   \quad \mbox{and}\qquad
     \hat{G}^m_t (x) =  \frac{1}{ \displaystyle | \mathcal{G}^{brown}_{t} |} \displaystyle \sum_{i \in \mathcal{G}^{brown}_{t}} \mathds{1}_{\left( CDS_{i,t}^m \leq x  \right)}.
\end{array}
\end{equation}
\noindent At this point, it is possible to measure the evolution of the transition risk over time.
To do this, we adopt the first-order Wasserstein distance, computed between the empirical cumulative distribution functions, defined as:
\begin{equation} \label{eq:TransitionRIskTaschini_Weiss}
   \mbox{TR}_t^{m} = \displaystyle \int_0^1 \left| \left( \hat{F}^{m}_t \right)^{-1}(u) -  \left( \hat{G}^{m}_t \right)^{-1}(u)   \right| du
\end{equation}
Intuitively, as long as the perceived level of transition risk in the market rises, the relative distance between green and brown CDS spreads increases, resulting in a larger measure of transition risk. This metric, which is used to gauge the distance between two distributions, has found large adoption in the literature; see, e.g., \cite{HOSSEININODEH2023103735}, \cite{GOFFARD2021350}, and  
\cite{FENG2023107166}.
\\
\indent \textit{Second Indicator.}\quad We introduce a novel approach to model transition risk, based on the carbon price. 
The basic idea is that significant perturbations in the time-series of carbon price should be related to the reduction of emission limits or the design of tighter green laws, i.e. drivers of transition risk.
\noindent
In particular, we can expect that brown companies should be more receptive than green companies to these dynamics, resulting in an impact on their CDS spreads. We denote the carbon price factor as $\mbox{CP}_t$ (or the variation $\Delta \mbox{CP}_t$). \\ \noindent In Figure \ref{fig:CarbonPrice}, we report the spot price for carbon allowances in Europe (January 03, 2017 - December 31, 2021). Our approach lies on a shared thought in the literature, whereby the carbon price is consider a driver of transition risk. 
\begin{figure}[H]
    \centering
    \includegraphics[scale=0.65]{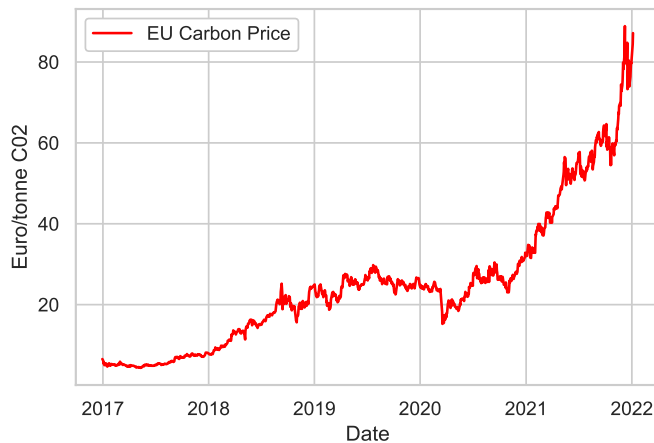}
    \caption{Price of carbon allowances in Europe.}
    \label{fig:CarbonPrice}
\end{figure}
Nevertheless, recent studies introduced this metric from a theoretical perspective only, as outlined in the following. \cite{Belloni_carbon_price} studies the impact of transition risks deriving from climate change mitigation policies on the ﬁnancial system, by assessing the consequences of carbon price rising via a banking sector contagion model.
\cite{le2022corporate} investigates how different future transition scenarios can affect the price of debt securities, assuming different theoretical carbon prices based on enactment of more stringent green policies.
So far, as we know, no studies have adopted the Carbon Price as an indicator of transition risk, in particular regarding the CDS spread pricing framework. Descriptive statistics of market CDS spreads and control variables are presented in Appendix \ref{Appendix:statistics}.

Before introducing the methodology, we mention further studies which adopted different control variables to investigate the change in CDS spreads compared to transition risk. 
In \cite{vozian2022climate}, a fine collection of market data has been used to regress several financial variables on the CDS spreads. In addition to GHG emissions proxy, the work proposes other indicators of the effort of the firm’s management to face green transition, which are defined based on internal policies, reduce-emission target setting and internal carbon pricing.  \\
In \cite{AskBertCDS}, the authors formulate a study on the effects of disclosing company’s 
 climate risks on the CDS spread. These measures are created by BERT, a Natural-Language-Processing algorithm which is able to extract quantifiable information from text data. In particular, this algorithm has been used to measure the firm exposure to transition and physical risk, analyzing a section of the 10-K filings document, a financial report where companies are obliged to disclose financial condition. Although these approaches are interesting they are beyond the scope of our work, whereby we only refers to GHG-related factors.

\subsection{Methodology} \label{sec:metholodogy}
Based on the framework of \cite{transitionriskCDSTaschini2021}, the research investigation is performed by applying a panel quantile regression, assessing the relations along the conditional distribution of the dependent variable, i.e. the CDS term-structure.
\\
Many authors en-light the advantages of this technique compared to standard linear regression. Indeed, quantile regression can mitigate the presence of outliers and non-normal errors, relaxing the assumptions underlying the application of standard linear regression; see, e.g., \cite{Quantile_REGR}. Let $y_{i,t}$ be the dependent variable, where $i = 1,\dots,N$ and $t = 1,\dots,T$.\footnote{From now on, $T$ refers to the number of time-based (daily) observations that we are considering in our empirical analysis.} The quantile regression is defined as: 
\begin{equation}
    \mbox{Q}_{y_{i,t}}( \tau | \boldsymbol{x_{i,t}} )  =  \alpha_{i,t} + \boldsymbol{x_{i,t}} \beta_{\tau} + \epsilon_{i,t}
\end{equation}
where $\tau \in (0,1)$, Q is the conditional quantile of $y_{i,t}$ with respect to $\boldsymbol{x_{i,t}}$ ($m$-dimensional regressor vector), and $\alpha_{i,t}$ indicates the fixed-effects. 
The estimation of the coefficients is performed in two steps; see, e.g.,  \cite{transitionriskCDSTaschini2021} and \cite{QuantileZhang}. First, we perform firm-specific quantile regressions to derive the firm-related fixed effects $\alpha_{i,t}$. This is achieved by:
\begin{equation} \label{eq:fixed_effect_coeff}
    ( \Tilde{\alpha}_{i,\tau} , \Tilde{\beta}_{i,\tau}) =  \underset{a \in \mathbb{A},b \in \mathbb{B}_{\tau}}{\mbox{argmin}} \sum_{t=1}^T  \rho_{\tau} \ (y_{i,t} - a - \boldsymbol{x_{i,t}}b ) 
\end{equation}
where $\mathbb{A} \subseteq \mathds{R}$, $\mathbb{B}_{\tau} \subseteq \mathds{R}^m$ and $\rho_{\tau}(u) = u(\tau - \mathds{1}_{ \{u<0 \} }  )$ is the quantile error function.
Finally, the regression coefficients are estimated by:
\begin{equation} \label{eq:complete_regression}
     \hat{\beta}_{\tau} =  \underset{b \in \mathbb{B}_{\tau}}{\mbox{argmin}} \sum_{i=1}^N\sum_{t=1}^T  \rho_{\tau} \ (y_{i,t} - \alpha_{i,\tau} - \boldsymbol{x_{i,t}}b ) 
\end{equation}
\section{Empirical Investigation and Results}\label{sec:empirical}
In this section, we describe the empirical investigation performed to test the hypothesis introduced in the previous section. Subection~\ref{sec:calibration} explains the calibration procedure, distinguishing between the jump-diffusion and the diffusion model. Subections \ref{panel_regression_model} and \ref{panel_regression_error} states the empirical investigation, discussing the two different experiments. Subection~\ref{sec:Results} presents the results.
\subsection{Calibration Procedure} \label{sec:calibration}
The calibration procedure, which consists in finding the optimal model parameters to fit the market CDS term-structure, is performed on both the models we are considering. In particular, we perform 1-day calibrations over the 5-year period under study on the complete CDS term-structure, for time-to-maturities 
$\hat{t}_1 = 0.5,\ \hat{t}_2 = 1,\ \hat{t}_3 = 2,\ \hat{t}_4 = 3,\ \hat{t}_5 = 4,\ \hat{t}_6 = 7,\ \hat{t}_8 = 10,\ \hat{t}_9 = 20,\ \hat{t}_{10} = 30 \ \mbox{years}$.
The optimization function is the same for both models, but the number of calibrated parameters is different. The simpler model, which does not contain jumps, is driven by only $\frac{V_{t}}{V_{def}}$ and $\sigma$ parameters, meanwhile, the jump-diffusion one requires also $\lambda$ and $\eta$. The formula to price a CDS under the diffusion model is outlined in Appendix \ref{Appendix:CDS_diffusion}. For plain notation, let $\Omega_t$ be the set of calibrated parameters at day $t$. We can define for the two models, where \textit{jd} is \textit{jump-diffusion} and \textit{d} is \textit{diffusion}, the sets of parameters: 
\begin{equation}
    \Omega_t^{jd} := \left\{ \frac{V_{t}}{V_{def}},\sigma , \lambda , \eta \right\}_t    \quad \mbox{and} \quad                           \Omega_t^{d} := \left\{  \frac{V_{t}}{V_{def}},\sigma      \right\}_t
\end{equation}
 Following \cite{BallestraPacelliRadi2020}, we minimize the mean absolute percentage error (MAPE):\footnote{An alternative error measure that is commonly adopted is the root mean square error. Nevertheless, in fitting the term structures of CDS spreads a relative error measure as the MAPE enables us to avoid over-weighting the mispricing in fitting long-term CDS spreads (the highest ones) and under-weighting the mispricing in fitting short-term CDS spreads (the lowest ones).}
\begin{equation}\label{eq:mape}
MAPE \left( t,\Omega_t^{ \{ k \}  } \right)=\frac{1}{10}\sum\limits_{i=1}^{10}\left|\frac{\Theta_t^{market}(\hat{t}_i)- \Theta_t^{model}(\hat{t}_i,\Omega_t^{ \{ k \}  })}{\Theta_t^{market}(\hat{t}_i)}\right|.
\end{equation}
where $k = \{j,jd \}$, $\Theta_t^k(\hat{t}_i,\Omega_t^{ \{ k \}  }) = \Pi\left(x,t;\hat{t}_i\right)$ if $k = jd$  and $\Theta_t^k(t,\hat{t}_i) = \Lambda\left(x,t;\hat{t}_i\right)$ if $k = d$ .  The recovery rate $\Upsilon$ is set to 0.6, for both models. These values has been chosen following common lines presented in the literature; see, e.g., \cite{JANKOWITSCH20081269}.
\subsection{CDS Model-Implied Panel Quantile Regression} \label{panel_regression_model}
The first panel quantile regression investigates \textbf{Hp 1.a} and \textbf{Hp 1.b}.  It is performed by computing the first-difference of the CDS spreads generated by the calibrated models, observing the relation with a proxy of the transition risk, in order to examine whether the model is able to reproduce the CDS market stylized facts. The quantile regression is defined as:
\begin{equation} \label{eq:Firstexperiment}
    \mbox{Q}_{\Delta \mathcal{\mbox{CDS}}_{i,t}^{model}}( \tau | \boldsymbol{x_{i,t}} )  =  \alpha_{i,\tau} + \beta_{1,\tau} r_{i,t} +  \beta_{2,\tau} \Delta \sigma_{i,t} +  \beta_{3,\tau} \Delta \mbox{MRI}_{i,t} +  \beta_{4,\tau} \Delta \mbox{TR}_{t}+\epsilon_{i,t}
\end{equation}
where $i=1,...,N$ indicates the firm while $t=1,...,T$ the time. The regressors comprises the stock return $r_{i,t}$, the volatility variation $\Delta \sigma_{i,t}$, the common factor $\Delta \mbox{MRI}_{i,t}$ and the changes transition risk $\Delta \mbox{TR}_{t}$. The fixed-effect $\alpha_{i,\tau}$, which are firm specific and change with respect to the considered quantile, are computed by \eqref{eq:fixed_effect_coeff}. The dependent variable is defined as:
\begin{equation}\label{eq:error_diff}
\begin{array}{cc}
    \Delta \mathcal{\mbox{CDS}}_{i,t}^{model} =  \mathcal{\mbox{CDS}}_{i,t}^{model} - \mathcal{\mbox{CDS}}_{i,t-1}^{model}
\end{array}
\end{equation}
where $\mathcal{\mbox{CDS}}_{i,t}^{model}$ denotes the model-implied CDS computed at day $t$ for the firm $i$.
\subsubsection{Calibration Errors Panel Quantile Regression} \label{panel_regression_error}
The second panel quantile regression investigates \textbf{Hp 2.a} and \textbf{Hp 2.b}. It is performed by adopting as dependent variable the difference between the model CDS spread and market CDS spread at day $t$ (see \eqref{eq:error}). In this case, we want to gauge whether the model underestimate or overestimate market CDS spreads, to determine if there is a significant relation between calibration errors and a proxy of the transition risk. The quantile regression is defined as:
\begin{equation}  \label{eq:Secondexperiment}
    \mbox{Q}_{\mathcal{E}_{i,t} } ( \tau | \boldsymbol{x_{i,t}} )   =  \gamma_{i,\tau} + \beta_{1,\tau} r_{i,t} +  \beta_{2,\tau}  \sigma_{i,t}  +  \beta_{3,\tau}  \mbox{TR}_{t}+\epsilon_{i,t}
\end{equation}
where $i=1,...,N$ indicates the firm while $t=1,...,T$ the time. In this case, we do not consider as regressors the variation of the control variables, instead they are defined as the value at day $t$. In particular, the regressors employed are the stock return $r_{i,t}$, the volatility $\sigma_{i,t}$ and the proxy of the transition risk $\mbox{TR}_{t}$. Similarly to the previous case, the firm-specific fixed effects $\gamma_{i,\tau}$ are estimated via \eqref{eq:fixed_effect_coeff}.
The dependent variable, computing the calibration error at day $t$ between $model$ and $market$ CDS spread, is defined as:
\begin{equation} \label{eq:error}
\begin{array}{cc}
     \mathcal{E}_{i,t} =  \mbox{CDS}_{i,t}^{model} - \mbox{CDS}_{i,t}^{market}  
\end{array}
\end{equation}

The panel quantile regressions defined in Subsections \ref{panel_regression_model} and \ref{panel_regression_error} are both performed over the entire CDS-term structure, with maturities $\hat{t}_i \in \{0.5, 1, 2, 3, 4, 5, 7, 10, 20, 30\}$ years and for quantiles $\tau \in \{ 0.1, \dots, 0.9  \}$. Results are presented in Appendix \ref{Appendix:Results} and discussed in the next section.\footnote{From now on, the discussion will be related to both the experiments (\ref{eq:Firstexperiment}, \ref{eq:Secondexperiment}); we avoided the introduction of other subsections for the sake of  readability.}\\
We mention that the panel quantile regressions have been tested to assess whether regressors present multicollinearity. This is achieved by performing the Variance Inflation Factor test (VIF, henceforth), which produces a measure of the correlation between the variables; see, e.g., \cite{VIFlessthan2.5} and \cite{VIFlessthan5}. In particular, if the outcome is greater than a given value one can choose arbitrarily, the regression is considered affected by multicollinearty. Following standard approaches in the literature, we set the threshold level to 5. We perform multicollinearity test for both the estimation of fixed effects while doing firm-specific quantile regressions \eqref{eq:fixed_effect_coeff}, and  for the second quantile regression \eqref{eq:complete_regression}. 
In the first case, considering the regressors as the first-difference of the control variables \eqref{eq:Firstexperiment}, all factors presented VIF values less than 2, thus no adjustments have been necessary.  Meanwhile, in the second  panel quantile regression \eqref{eq:Secondexperiment}, when regressors are considered as the value of the control variable at day $t$, the VIF test founds strong collinearity (above 15) between the proxy of the transition risk built by the Wasserstein distance (or derived by \eqref{eq:TaschiniRegressor_TR_median}) and the MRI factor. For this reason, we decided not to include the MRI factor among the regressors in the absolute-value case.

\indent To verify if the calibration errors are statistically significant, we have performd a $t$-test\footnote{We perform a $t$-test at 5\% confidence level.} on the difference between the calibration errors of the jump-diffusion model and the calibration errors of the diffusion model. 
The test has been carried out on several firms and for various CDS spread maturities. In all cases, the null hypothesis has been verified, meaning that there is no statistically significant difference between the models calibration errors.
\subsection{Results} \label{sec:Results}
The results of the empirical investigation are shown in the tables in Appendix \ref{Appendix:Results} .
The outcomes regarding the first hypothesis are reported in Table \ref{tab:jump-diffusion-deltaCDSmodel}.
First, we can notice a negative relationship between model-implied $\Delta \mbox{CDS}$ spreads and the stock returns, as we expect according to theoretical and empirical findings. 
In fact, this result is expected from structural models theory (see, e.g., \cite{merton1974pricing}), which states that as stock returns increase, the probability of default decreases, as well as the CDS spreads. 
Empirical evidences supporting this fact come from \cite{transitionriskCDSTaschini2021}, \cite{GALIL2014271}, and \cite{Pires2015TheED}. 
Second, the relationship with the volatility variation ($\Delta \sigma$) varies accordingly to the quantile considered, moving from negative values at smaller quantiles to positive values at higher quantiles.
This partially contrasts the theoretical background, which states that a positive variation in the CDS spreads should be related to an increment in the volatility, denoting an increase in the default probability. Nevertheless, there are empirical studies that confirm our findings, assigning negative values to volatility in downward periods and positive values in upward periods; see \cite{transitionriskCDSTaschini2021}, \cite{KOUTMOS201919}.
Third, we appreciate a positive relation with the MRI factor, indicating that the calibrated model prices are following the  general trends of the CDS market, as outlined empirically by \cite{GALIL2014271}.  
For our purposes, the most interesting is the estimated relationship with the proxy of the transition risk. We can notice how, regardless of the quantile or the maturity considered, the relationship is positive. Again, this confirms that the jump-diffusion is able to properly depict the stylized facts of the CDS market.
The regressions are performed adopting the transition risk factor defined in \eqref{eq:TransitionRIskTaschini_Weiss} only, 
as no major difference was found compared to the variation of the first indicator computed by \eqref{eq:TaschiniRegressor_TR_median}.
We also note that, for all the maturities and the quantiles examined, the regressions are significant. 
\\
The analysis regarding the third indicator of the transition risk, defined by the \textit{Carbon Price} ($\mbox{CP}_t$), is not proposed in the empirical investigation, since it shows a negative correlation with indicators of transition risk defined in \eqref{eq:TaschiniRegressor_TR_median} and \eqref{eq:TransitionRIskTaschini_Weiss}. This is reported in Appendix \ref{Appendix:statistics}, Tables \ref{tab:correlation_CP_weiss} and \ref{tab:correlation_CP_med}. Further analysis will be performed to assess whether the carbon price can be considered a suitable proxy for the transition risk.

Table \ref{tab:jump-diffusion-errorcalib} presents the results regarding the relationship between calibration errors and the control variables obtained by the jump-diffusion model, presented in \eqref{eq:Secondexperiment}.
In this case, we are mainly interested in the relation that the calibration error assumes with the transition risk. As we can see, there is no significant relation between between the error and the transition risk. This confirms our hypothesis, meaning that the model is able to capture the transition risk in the calibration procedure.
Overall, the insights we can glean from these two experiments provide a first confirmation that transition risk assumes a \textit{jump-like} nature and that the jump-diffusion model can properly be calibrated even in this new market scenario.

Tables \ref{tab:diffusion-deltaCDSmodel} and \ref{tab:diffusion-errorcalib} provide the outcomes of the same analysis, whereby the calibrated model is the diffusion one.
In Table \ref{tab:diffusion-deltaCDSmodel}, we can notice that the model cannot properly reproduce the market stylized facts of the CDS market: The relationship of the dependent variable ($\Delta \mbox{CDS}$) with the stock returns is mostly positive, in contrast with theoretical and empirical expectations. Moreover, we find high $p$-values for several regressions, meaning that they are not always significant.
The estimated relationship with the volatility is more aligned with the jump-diffusion model predictions, although we obtain negative and positive values along the CDS maturities. 
Regarding the transition risk, again, the estimation is not in line with market expectations. For short and medium maturities (from 6 months to 5 years), we can observe negative relations, in contrast with market and jump-diffusion model evidences. Furthermore, for certain quantiles, the regressions are not significant, assuming very high $p$-values. Therefore, we can argue that the diffusion model is not able to reproduce the stylized facts of the CDS market.

The investigation of the last hypothesis is described in Table \ref{tab:diffusion-errorcalib}. In contrast with expectations, we can observe that there is no significant relationship with the transition risk, similarly to the jump-diffusion model.
This might suggest that the diffusion model is able to capture the transition risk but, however, this result is not relevant. In fact, if we couple this result with the observations raised by Table \ref{tab:diffusion-deltaCDSmodel}, showing that it does not depict the stylized facts of the CDS market, we can argue that the diffusion model is not able to price transition risk.
\section{Conclusions}\label{Sec::6}
Developing pricing models that can properly capture transition risk from the market is becoming a relevant issue in quantitative finance. As climate change worsens, regulators' efforts to reduce GHG emissions and mitigate environmental damages can have severe consequences on the economies, forcing companies to adjust their business models and threatening the stability of the financial system. Hence, the financial sector must be equipped with state-of-the-art pricing models that can adequately manage climate risk factors and classic financial risks (\textit{market, credit, liquidity}), by which it will be possible to examine the impact of transition risk on a company's creditworthiness, gauging the effect of green policies on the financial stability of the firm.

In this study, we aim to assess whether a structural credit-risk jump-diffusion model is able to capture transition risk as priced in the market, where the firm dynamic is affected by downward jumps, describing the impact on green policies on the firm valuation. This is achieved by performing daily calibrations of the model to the market CDS term-structure of 84 European firms, from January 03, 2017 to December 31, 2021, and assessing the relationships between implied prices and a proxy of the transition risk, obtained via panel quantile regression. The same analysis is performed on a diffusion model, to compare the model capabilities.
Results are remarkable: The jump-diffusion model does not only outperform the diffusion model in a proper depiction of the market stylized facts of the CDS market, but also it shows no significant relationships between calibration errors and the proxy of transition risk, confirming the \textit{jump-like} nature of the transition risk. 
As future research, relevant topics regard the implementation of more complex models that accommodate additional climate and macroeconomic risk factors.
Currently, we are working on the same structural credit-risk model where the interest rate is stochastic, in order to further enhance the model's capabilities to calibrate the market CDS term-structure.

\bigskip
\bigskip

\textbf{Acknowledgments:} The Authors acknowledge Fabrizio Lillo for comments on a early version of this paper. 

\bigskip

\textbf{Credit Author Statement:} Giulia Livieri: Formal analysis, Investigation, Methodology, Review \& editing. Davide Radi: Formal analysis, Investigation, Methodology, Review \& editing. Elia Smaniotto: Formal analysis, Investigation, Methodology, Data curation, Writing. 

\bigskip
\textbf{Conflict of Interest Statement:}
The authors declare that they do not have any conflict of interest.

\bigskip

\textbf{Ethical Approval:}
This article does not contain any studies with human participants or animals performed by any of the authors.

\bigskip

\textbf{Data Availability:} The data employed in this study is owned by Refinitiv Eikon. The authors accessed the data with their institution’s membership. The authors had no special access privileges to the data others would not have. The data underlying the results presented in the study are available from the Refinitiv Eikon Database website (https://eikon.thomsonreuters.com/index.html).

\bigskip

\begin{appendices}
\section{Technical Results}\label{AppendixA}

\begin{proof}[Proof of Proposition \ref{Prop::1}]
Consider the bond price $B\left(x,t;T\right)$ in \eqref{Def::B} and denote by $F\left(x,t;\omega\right) = \mathcal{L}\left(B\left(x,t;T\right)\right)\left(\omega\right)$ its Laplace transform. Then,
\begin{equation}\label{LaplaceTofB}
F \left(x,t;\omega\right) =  \displaystyle \int_{t}^{\infty} e^{-\omega\left(T-t\right)}\mathbb{E}\left[\left. e^{-r\left(T-t\right)}  \mathds{1}_{\left\{t_{d} > T\right\}} + \Upsilon  e^{-r\left(t_{d} - t\right)}  \mathds{1}_{\left\{T \geq t_d \geq t \right\}}  + \int_{t}^{T}  b  e^{-r\left(z-t\right)} \mathds{1}_{\left\{t_{d} > z \right\}} dz   \right|\mathcal{F}_{t}\right]  dT
\end{equation}
By Fubini's theorem and straightforward considerations, formula \eqref{LaplaceTofB} can be rewritten as follows
\begin{equation}\label{LaplaceTofB2}
\begin{array}{lll}
F \left(x,t;\omega\right) & = &\displaystyle \int_{t}^{\infty} e^{-\left(\omega+r\right)\left(T-t\right)}\mathbb{E}\left[\left.\mathds{1}_{\left\{t_{d}>T\right\}}\right|\mathcal{F}_{t}\right]dT +  \displaystyle \int_{t}^{\infty} \Upsilon  e^{-\omega\left(T-t\right)} \mathbb{E}\left[\left. e^{-r\left(t_{d}-t\right)}\mathds{1}_{ \{T \geq t_d \geq t \} }\right|\mathcal{F}_{t}\right]  dT \\
\\
 & & + \displaystyle \int_{t}^{\infty} e^{-\omega\left(T-t\right)}  \mathbb{E}\left[\left.\int_{t}^{T}  b e^{-r\left(z-t\right)} \mathds{1}_{ \{ t_d >z \} } dz \right|\mathcal{F}_{t}\right]  dT
\end{array}
\end{equation}
The following are the computations to obtain the final Laplace transform, which will be splitted in order to compute the three addendum of \eqref{LaplaceTofB2} separately. The \textit{first addendum} is defined as:
\begin{equation} \label{eq:laplacebond}
\begin{array}{lll}
\displaystyle \int_{t}^{\infty} e^{-\left(\omega+r\right)\left(T-t\right)}\mathbb{E}\left[\left.\mathds{1}_{\left\{t_{d}>T\right\}}\right|\mathcal{F}_{t}\right]dT  & = &
    \displaystyle \mathbb{E}\left[\left. \int_{t}^{t_d} e^{-\left(\omega+r\right)\left(T-t\right)} dT \right|\mathcal{F}_{t}\right]\\
    \\
    & =&  \displaystyle \frac{1}{\omega + r} \displaystyle \left[  1- \mathbb{E}\left[\left.e^{-\left(r+\omega\right)\left(t_{d}-t\right)}\right|\mathcal{F}_{t}\right] \right]
\end{array}
\end{equation}
The \textit{second addendum} is defined as:
\begin{equation} \label{eq:laplacerecovery}
\begin{array}{lll}
\displaystyle  \Upsilon \int_{0}^{\infty} e^{-\omega\left(T-t\right)} \mathbb{E}\left[\left. e^{-r\left(t_{d}-t\right)}
\mathds{1}_{\left\{T\geq t_d\geq 0\right\}}\right|\mathcal{F}_{t}\right] dT &= &  \Upsilon  \displaystyle  \mathbb{E}\left[\left. \int_{t_d}^{\infty}  e^{-\omega\left(T-t\right)-r\left(t_{d}-t\right)} dT \right|\mathcal{F}_{t}\right]\\
\\
& =&  \displaystyle \frac{\Upsilon}{\omega} \mathbb{E}\left[\left.e^{-\left(r+\omega\right)\left(t_{d}-t\right)}\right|\mathcal{F}_{t}\right] 
\end{array}
\end{equation}
Before providing the final form of \textit{third addendum}, we note that:
\begin{equation}\label{eq:pre_proofcoupon}
\begin{array}{lll}
\mathbb{E}\left[\left. \displaystyle \int_{t}^{T} b e^{-r\left(z-t\right)} \mathds{1}_{ \{ t_d >z \} } dz \right|\mathcal{F}_{t}\right] & = &  b \mathbb{E}\left[\left. \displaystyle \int_{t}^{T \wedge t_d}   \ e^{-r\left(z-t\right)}   dz \right|\mathcal{F}_{t}\right] \\
\\
& = & \displaystyle \frac{b}{r} \mathbb{E}\left[\left. 1 - e^{-r\left(\left(T \wedge t_{d}\right)-t\right)}\right|\mathcal{F}_{t}\right] \\ 
\\
& = &  \displaystyle \frac{b}{r} \mathbb{E}\left[\left. 1 - e^{-r\left(t_{d}-t\right)} \mathds{1}_{ \{t_d < T \} }\right|\mathcal{F}_{t}\right] - \frac{b}{r} e^{-r\left(T-t\right)} \mathbb{E}\left[\left. \mathds{1}_{ \{ t_d >T \} }  \right|\mathcal{F}_{t}\right] 
\end{array}
\end{equation}
where $\wedge$ is the min operator. Employing \eqref{eq:pre_proofcoupon}, the \textit{third addendum} will be:
\begin{equation} \label{eq:coupon_proof}
\begin{array}{lll}
\displaystyle \int_{t}^{\infty} e^{-\omega\left(T-t\right)} \mathbb{E}\left[\left. \displaystyle \int_{t}^{T} b e^{-r\left(z-t\right)} \mathds{1}_{ \{ t_d >z \} } dz \right|\mathcal{F}_{t}\right] dT & = & \displaystyle\int_{t}^{\infty} e^{-\omega\left(T-t\right)} \displaystyle \frac{b}{r} \mathbb{E}\left[\left. 1 - e^{-r\left(t_{d}-t\right)} \mathds{1}_{ \{t_d < T \} }\right|\mathcal{F}_{t}\right] dT\\
\\
& & -\displaystyle \int_{t}^{\infty} e^{-\omega\left(T-t\right)} \ \displaystyle   \frac{b}{r} e^{-r\left(T-t\right)} \mathbb{E}\left[\left. \mathds{1}_{ \{ t_d >T \} }  \right|\mathcal{F}_{t}\right]   dT
\end{array}
\end{equation}
Then, computing separately the two parts of the right-hand side of \eqref{eq:coupon_proof}, we have:
\begin{equation}
\begin{array}{lll}
    \displaystyle \int_{t}^{\infty} e^{-\omega\left(T-t\right)} \left(\displaystyle \frac{b}{r} \mathbb{E}\left[\left. 1 - e^{-r\left(t_{d}-t\right)} \mathds{1}_{ \{t_d < T \} }\right|\mathcal{F}_{t}\right]  \right) dT & = & \displaystyle \frac{b}{r} \int_{t}^{\infty}  e^{-\omega\left(T-t\right)} dT - \frac{b}{r} \mathbb{E}\left[\left. 
    \displaystyle \int_{t_{d}}^{\infty}e^{-r\left(t_{d}-t\right)-\omega\left(T-t\right)}dT \right|\mathcal{F}_{t}\right] \\
    \\
   & = & \displaystyle \frac{b}{r \omega} - \frac{b}{r \omega} \mathbb{E}\left[\left.  e^{-\left(r+\omega\right) \left(t_{d}-t\right)}\right|\mathcal{F}_{t}\right] 
\end{array}
\end{equation}
and
\begin{equation}
\begin{array}{lll}
- \displaystyle \int_{0}^{\infty} e^{-\left(\omega+r\right)\left(T-t\right)}  \frac{b}{r} \mathbb{E}\left[\left. \mathds{1}_{ \{ t_d >T \} }  \right|\mathcal{F}_{t}\right]   dT & = & - \displaystyle \frac{b}{r} \mathbb{E}\left[\left.\int_{0}^{t_{d}} e^{-\left(\omega+r\right)\left(T-t\right)}dT \right|\mathcal{F}_{t}\right]\\
\\
& = & \displaystyle \frac{b}{r\left(r+\omega\right)}\mathbb{E}\left[\left.  e^{-\left(r+\omega\right)\left(t_{d}-t\right)} \right|\mathcal{F}_{t}\right]  - \frac{b}{r\left(r+\omega\right)}
\end{array}
\end{equation}
Finally, the \textit{third addendum} is given by:
\begin{equation}\label{Finalthirdaddendum}
\displaystyle \int_{0}^{\infty} e^{-\omega\left(T-t\right)} \mathbb{E}\left[\left. \displaystyle \int_{t}^{T} b e^{-r\left(z-t\right)} \mathds{1}_{ \{ t_d >z \} } dz \right|\mathcal{F}_{t}\right] dT = \frac{b}{r\omega}  - \frac{b}{r(r+\omega)} -
\left(\frac{b}{r\omega} - \frac{b}{r(r+\omega)}\right) \mathbb{E}\left[\left.  e^{-\left(r+\omega\right)\left(t_{d}-t\right)} \right|\mathcal{F}_{t}\right]
\end{equation}
Collecting all the terms derived (\eqref{eq:laplacerecovery},\eqref{eq:laplacebond},\eqref{Finalthirdaddendum}) and recalling formula \eqref{eq:laplace}, we obtain the Laplace transform of the bond price as in \eqref{eq:bondprice}.
\end{proof}

\medskip

\begin{proof}[Proof of Proposition \ref{Prop::2}]
Denote by $\Pi$ the CDS spread which is payed continuously. Then,
by definition and based on the modeling framework, the Premium Leg is defined as:
\begin{equation}
\mbox{PremiumLeg}\left(x,t;T\right) = \mathbb{E}\left[\left. \displaystyle \Pi \int_{t}^{T} \ e^{-r\left(z-t\right)} \mathds{1}_{\left\{t_{d}\geq z\right\}}dz\right|\mathcal{F}_{t}\right]
\end{equation}
The Laplace transform of the Premium Leg, denoted by $F_{PremiumLeg}\left(x,t;\omega\right)$, can be computed as already shown in \eqref{Finalthirdaddendum}, obtaining:
\begin{equation}
 F_{PremiumLeg}\left(x,t;\omega\right) = \mathbb{E}\left[\left. \displaystyle  e^{-\left(r+\omega\right)\left(t_{d}-t\right)}\right|\mathcal{F}_{t}\right]\left(\frac{\Pi}{r\left(r+\omega\right)} - \frac{\Pi}{r\omega}\right) + \frac{\Pi}{r\omega} - \frac{\Pi}{r\left(r+\omega\right)} 
\end{equation}
Moreover, the Protection Leg, denoting with $1-\Upsilon$ the loss-given default, is given by:
\begin{equation} \label{eq:laplace_prem_leg}
\mbox{ProtectionLeg}(x,t;T) = \mathbb{E}\left[\left. \displaystyle e^{-r\left(t_{d}-t\right)}\left(1-\Upsilon\right)\mathds{1}_{\left\{T\geq t_{d} \geq t\right\}}\right|\mathcal{F}_{t}\right]  
\end{equation}
Similar as before, 
the Laplace transform of the Protection Leg, denoted by $F_{ProtectionLeg}\left(x,t;\omega\right)$, can be derived as in \eqref{eq:laplacerecovery}, obtaining:
\begin{equation} \label{eq:laplace_prot_leg}
F_{ProtectionLeg}\left(x,t;\omega\right) = \mathbb{E}\left[\left. e^{-\left(r+\omega\right)\left(t_{d}-t\right)}\right|\mathcal{F}_{t}\right]\frac{\left(1-\Upsilon\right)}{\omega} 
\end{equation}
Collecting the terms derived (\eqref{eq:laplace_prem_leg},  \eqref{eq:laplace_prot_leg}) and recalling
\eqref{eq:laplace}, the semi-analytical expression of the CDS spread in \eqref{eq:CDSsperadsSAS}, follows by choosing $\Pi$ that makes the Premium Leg equals to the Protection Leg.
\end{proof}

\section{The Credit-Risk Model Without Jump}\label{AppendixB}
In this appendix we summarize credit-risk structural model without jump and report the pricing formula for a CDS. 
In this simplified framework, the value of a firm is driven by a Geometric Brownian Motion, defined as:
\begin{equation}\label{eq:GBM}
\frac{dS_t}{S_{t}} = r dt + \sigma_s dW^Q_{t}
\end{equation}
where $S_t$ denotes the firm's asset value, $r$ is the risk-free interest rate and $\sigma_s$ is the volatility parameter and $W_t$ is a standard Weiner process under the risk neutral measure $Q$. 
As in the model defined in Section \ref{Sec::3}, the firm goes bankrupt when breaches from above a constant default barrier $S_{def}$, and it coincides with the first-time passage:
\begin{equation}
t_{d} = \inf \left\{ t > 0 : S_{t} \leq S_{def} \right\}, \quad t \in (0,T]
\end{equation}
Following \cite{brigo2013counterparty} and emplying the notation $z = \ln(S)$, the survival probability $\mathbb{P}^{surv}(z,t;T)$ computed in $t$ is defined as:
\begin{equation} \label{eq:survprob}
    \mathbb{P}^{surv}(z,t;T) = N \left( \frac{\ln{\frac{S_t}{S_{def}}}+m(T-t)} {\sigma \sqrt{T-t}} \right) - \left( \frac{S_{def}}{S_t} \right)^{\frac{2m}{\sigma^2}} N \left( \frac{\ln{\frac{S_{def}}{S_t}}+m(T-t)} {\sigma \sqrt{T-t}} \right) 
\end{equation} 
where $m = r - \frac{1}{2} \sigma_s^2$ and T indicates the maturity.

\subsection{Pricing a CDS in the No-Jump Framework} \label{Appendix:CDS_diffusion}
Having derived the survival probability $\mathbb{P}^{surv}(z,t;T)$ defined in \eqref{eq:survprob}, we follow the approach proposed in \cite{FangOosterleeCDS_closed} to derive the formula to price a CDS spread, denoted by $\Lambda$. It follows that:
\begin{equation}
    \Lambda(z,t;T) = \frac{(1-\Upsilon) \left(\displaystyle \int_t^T e^{-r(T-t)} d \mathbb{P}^{def} (z,t;s) \right) }{ \displaystyle \int_t^T e^{-r (s-t)} \mathbb{P}^{surv} (z,t;s)  ds }
\end{equation}
Integrating by parts and defining the default probability as $\mathbb{P}^{def} (z,t;T) = 1 - \mathbb{P}^{surv}(z,t;T)$, we obtain the final form of the pricing formula:
\begin{equation}\label{FormlaCDSsimple}
\Lambda(z,t;T) = (1-\Upsilon) \left( \frac{\displaystyle 1 - e^{-r(T-t)} \mathbb{P}^{surv}(z,t;T) }{ \displaystyle\int_t^T e^{-r (s-t)} \mathbb{P}^{surv}(z,t;s) ds } -r \right)
\end{equation}
The integral at the denominator of formula \eqref{FormlaCDSsimple} can be approximated using a classical quadrature scheme method, see, e.g., \cite{FangOosterleeCDS_closed}.

\section{Theoretical Term-Structure of the Credit Spread} \label{Appendix:theoreticalspread}
In the modeling framework outlined in Section \ref{Sec::3}, the parameters $\lambda$ and $\eta$ drive the transition risk whereby, the former models the arrival rate of more stringent green regulations, and the latter describes the magnitude of downward jumps on the firm's value. 
In fact, as underlined in \cite{AgliardiAgliardi2021}, the parameter $\eta$ describes the impact of stricter green policies on the value of a firm, in the manner whereby the higher is $\eta$, the greener is the firm. 

In this Appendix we conduct a sensitivity analysis on the parameter $\eta$, generating several credit spreads. First, in Figure \ref{fig:1}, we report the theoretical term-structure of a defaultable (green) bond and a CDS spread, generated by the model pricing formulas for different level of greenness. 
We observe an increment of the price of the bond as $\eta$ increases (greenness is higher). A lower bond price, corresponding to a lower $\eta$ value, guarantees a higher return to the investor, compensating for the higher exposure to transition risk.
\begin{figure}[H]
    \centering
    \includegraphics[scale=0.7]{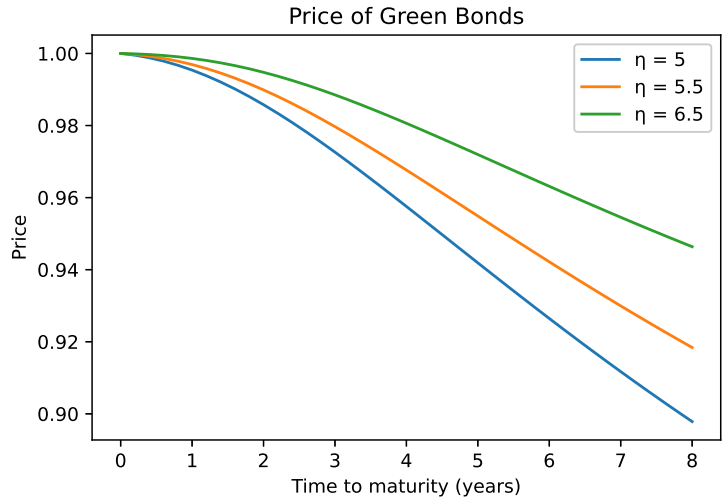}
    \caption{$V_t$ = 4,\ $V_{def}$ = 1,\ $\Upsilon = 0.6$,\ b = 0,\ r = 0.02 ,\ $\sigma = 0.2$,\ $\lambda = 0.4$. }
    \label{fig:1}
\end{figure} \noindent
Similarly, in Figure \ref{fig:2} we present the CDS spreads generated by the model. We can observe that, as $\eta$ increases, the CDS spread decreases. This indicates that the CDS spread of green companies is lower than that of brown companies.
\begin{figure}[H]
    \centering
    \includegraphics[scale=0.7]{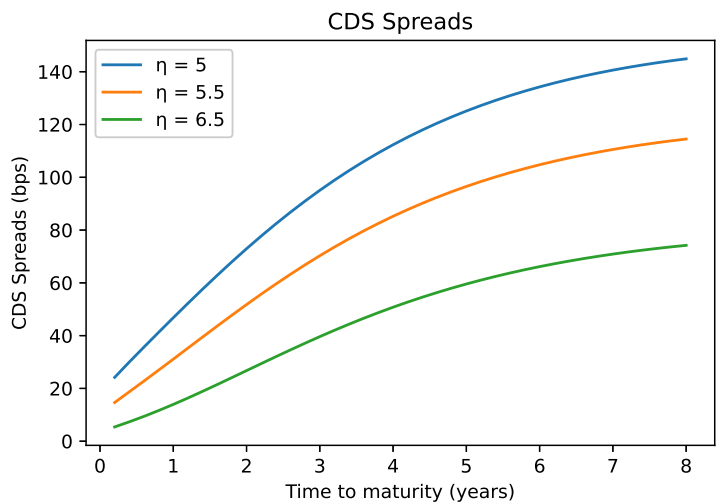}
    \caption{$V_t$ = 4,\ $V_{def}$ = 1,\ $\Upsilon$  = 0.6,\  r = 0.02 ,\ $\sigma = 0.2$,\ $\lambda = 0.4$. }
    \label{fig:2}
\end{figure} \noindent
In the last Figure \ref{fig:3} we derive the green spread generated by the model, computing the yields of a pure green bond and of a bond issued by a firm with a given shade of greenness. Specifically, a pure green bond is a bond issued by a firm that is not affected by transition risk, so that, its value follows a diffusion process without jumps. Assuming that the bond considered neither pays coupons nor refunds the holder with the recovery rate in the default event, the green spread is given by:
\begin{equation}
    \Psi(x,t;T,\eta) = \frac{- \log(  B(x,t;T,\eta) / ( e^{-r(T-t)} \mathbb{P}^{surv}({x,t;T} ) ) }{T-t}
\end{equation}
where $\mathbb{P}^{surv}({x,t;T} )$ is defined in \eqref{eq:survprob} and $B$ is the bond price derived by our model, under the same assumptions as the pure green bond. The theoretical term-structure of the green spreads for different shade of greenness is reported in Figure \ref{fig:3}.
\begin{figure}[H]
    \centering
    \includegraphics[scale=0.7]{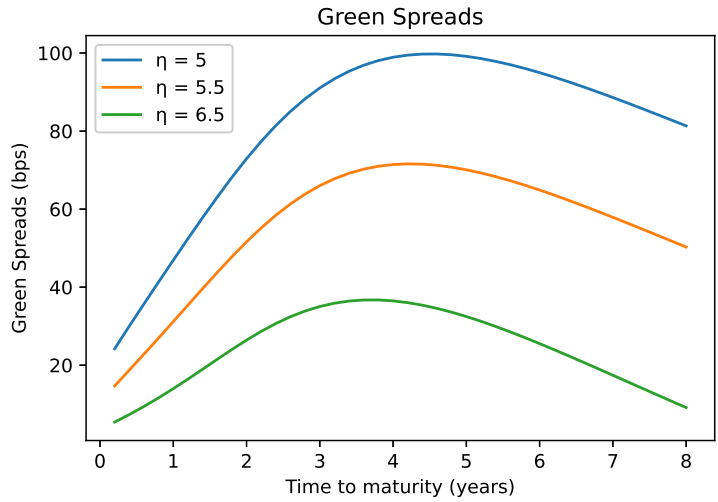}
    \caption{$V_t$ = 4,\ $V_{def}$ = 1,\ $\Upsilon$ = 0,\ b = 0, \ r = 0.02 ,\ $\sigma = 0.2$,\ $\lambda = 0.4$.}
    \label{fig:3}
\end{figure}

\newgeometry{top=0.2in, bottom=0.2in, left=1in, right=1in}

\section{Panel Quantile Regressions Tables} \label{Appendix:Results}
\begin{table}[H]
 \renewcommand{\arraystretch}{0.5}
 \begin{adjustwidth}{0cm}{}
 \scalebox{0.73}{%
\begin{tabular}{l|l|ccccccccc} \hline & \\[-1.5ex]
& & 0.1 &        0.2 &       0.3 &     0.4 &     0.5 &      0.6 &      0.7 &      0.8 &       0.9 \\ \hline  \hline & \\[-1.5ex] 
 \multirow{8}{*}{6 m} & \mbox{Return} &$-4.1068^{***}$ & $-2.3116^{***}$ & $-1.3085^{***}$ & $-0.8791^{***}$ &  $-0.6093^{***}$ & $-0.8423^{***}$ & $-1.4223^{***}$ &  $-2.4705^{***}$ &  $-5.1219^{***}$ \\  
 &  & (0.4655) &  (0.1055) &  (0.0400) &  (0.0203) &  (0.0125) &  (0.0164) &   (0.031) &   (0.078) &   (0.4157) \\  
 &  $\Delta$ Vol & $-3.3603^{***}$ & $-0.7522^{***}$ & $-0.2759^{***}$ & $-0.1066^{***}$ &  $-0.0086^{***}$ &  $0.0728^{***}$ &  $0.2226^{***}$ &  $0.5595^{***}$ &   $2.7067^{***}$ \\ 
 &  & (0.0262) &  (0.0061) &  (0.0024) &  (0.0013) &  (0.0008) &   (0.0010) &  (0.0019) &  (0.0044) &   (0.0221) \\  
 &  $\Delta$ MRI & $0.3477^{***}$ &   $0.3116^{***}$ &  $0.2888^{***}$ &  $0.2737^{***}$ &  $0.2700^{***}$ &   $0.2789^{***}$ &  $0.3211^{***}$ &  $0.3947^{***}$ &   $0.5663^{***}$ \\ 
 &  & (0.0178) &  (0.0030) &  (0.0010) &  (0.0004) &  (0.0002) &  (0.0003) &  (0.0006) &  (0.0017) &   (0.0115) \\  
 &  $\Delta$ TR & $0.0943^{***}$ &  $0.0760^{***}$ &  $0.0702^{***}$ &  $0.0663^{***}$ &  $0.0644^{***}$ &  $0.0717^{***}$ & $0.0815^{***}$ &  $0.0925^{***}$ &   $0.1421^{***}$ \\  
 & & (0.0134) &   (0.0028) &  (0.0010) &  (0.0004) &  (0.0002) &  (0.0003) &  (0.0007) &  (0.0019) &   (0.0105) \\  \hline  \\[-1.5ex]
 \multirow{8}{*}{1 y} & \mbox{Return} & $-6.3824^{***}$ & $-3.1756^{***}$ & $-1.9581^{***}$ & $-1.3818^{***}$ & $-0.9189^{***}$ & $-1.3005^{***}$ & $-2.1068^{***}$ & $-3.5178^{***}$ &  $-7.0436^{***}$ \\ 
 & & (0.5008) &  (0.1389) &  (0.0588) &  (0.0330) &  (0.0184) &  (0.0254) &  (0.0465) &  (0.1001) &   (0.4515) \\  
 & $\Delta$ Vol & $-3.8476^{***}$ & $-0.9763^{***}$ & $-0.4102^{***}$ & $-0.1749^{***}$ & $-0.0139^{***}$ &  $0.1111^{***}$ &  $0.3207^{***}$ &  $0.7686^{***}$ &   $2.9986^{***}$ \\ 
 & & (0.0278) &   (0.0081) &  (0.0036) &  (0.0021) &  (0.0011) &  (0.0016) &  (0.0028) &  (0.0055) &   (0.0231) \\   
 & $\Delta$ MRI & $0.4105^{***}$ &  $0.3998^{***}$ &  $0.3862^{***}$ &  $0.3745^{***}$ &  $0.3674^{***}$ &  $0.3767^{***}$ &  $0.4117^{***}$ &  $0.4782^{***}$ &   $0.6534^{***}$ \\  
 & & (0.0152) &  (0.0033) &  (0.0012) &  (0.0006) &  (0.0003) &  (0.0004) &  (0.0008) &   (0.0019) &   (0.0105) \\  
 & $\Delta$ TR & $0.0922^{***}$ &  $0.0923^{***}$ &  $0.0924^{***}$ &  $0.0897^{***}$ &   $0.0850^{***}$ &  $0.0937^{***}$ &   $0.1094^{***}$ &  $0.1210^{***}$ &   $0.1417^{***}$ \\ 
 & & (0.0126) &  (0.0032) &  (0.0012) &  (0.0006) &  (0.0003) &  (0.0004) &  (0.0008) &  (0.0020) &   (0.0101) \\  \hline & \\[-1.5ex]
 \multirow{8}{*}{2 y} & \mbox{Return} & $-7.6878^{***}$ & $-3.9118^{***}$ & $-2.7359^{***}$ & $-2.009^{***}$ & $-1.4553^{***}$ & $-2.0721^{***}$ & $-2.9864^{***}$ & $-4.7830^{***}$ &  $-9.2548^{***}$ \\
 & & (0.6422) &  (0.1765) &  (0.0978) &  (0.0598) &   (0.0374) &   (0.0509) &  (0.0810) &  (0.1555) &  (0.5562) \\   
  & $\Delta$ Vol & $-4.1547^{***}$ &  $-1.2675^{***}$ & $-0.6649^{***}$ & $-0.2695^{***}$ & $-0.0113^{***}$ &  $0.2052^{***}$ &  $0.5546^{***}$ &  $1.1956^{***}$ &   $3.3557^{***}$ \\
 & & (0.0340) &  (0.0101) &  (0.0059) &  (0.0037) &  (0.0023) &  (0.0031) &  (0.0047) &    (0.0085) &   (0.0272) \\    
 & $\Delta$ MRI & $0.4236^{***}$ &  $0.3769^{***}$ &  $0.3753^{***}$ &  $0.3645^{***}$ &  $0.3590^{***}$ &  $0.3673^{***}$ &  $0.4055^{***}$ &  $0.4567^{***}$ &   $0.5719^{***}$ \\ 
 & & (0.0134) &  (0.0029) &  (0.0014) &   (0.0008) &  (0.0004) &  (0.0006) &  (0.0010) &  (0.0022) &   (0.0093) \\
 & $\Delta$ TR & $0.1451^{***}$ &   $0.1486^{***}$ &  $0.1443^{***}$ &  $0.1409^{***}$ &  $0.1403^{***}$ &  $0.1521^{***}$ &  $0.1656^{***}$ &  $0.1790^{***}$ &   $0.1969^{***}$ \\
 & & (0.0135) &  (0.0033) &  (0.0016) &  0.000964 &  0.000583 &   0.00079 &  0.001298 &  0.002672 &   0.010565 \\ 
  \hline & \\[-1.5ex]
 \multirow{8}{*}{3 y} & \mbox{Return} & $-8.5255^{***}$ & $-4.6270^{***}$ & $-3.2277^{***}$ & $-2.5058^{***}$ & $-1.9138^{***}$ & $-2.6231^{***}$ & $-3.8275^{***}$ & $-5.9435^{***}$ & $-11.1866^{***}$ \\ 
 & & (0.7384) &  (0.2370) &  (0.1324) &  (0.0844) &  (0.0539) &  (0.0696) &  (0.1132) &  (0.2085) &   (0.6588) \\   
 & $\Delta$ Vol & $-4.3768^{***}$ & $-1.5770^{***}$ &  $-0.8194^{***}$ & $-0.3748^{***}$ & $-0.0381^{***}$ &  $0.2488^{***}$ &  $0.7247^{***}$ &  $1.4830^{***}$ &   $3.7591^{***}$ \\ 
 & & (0.0381) &  (0.0133) &  (0.0079) &  (0.0052) &  (0.0033) &  (0.0042) &  (0.0066) &  (0.0112) &   (0.0310) \\ 
 & $\Delta$ MRI & $0.3993^{***}$ &  $0.3560^{***}$ &  $0.3489^{***}$ &  $0.3461^{***}$ &  $0.3462^{***}$ &  $0.3526^{***}$ &  $0.3789^{***}$ &   $0.4192^{***}$ &   $0.5136^{***}$ \\ 
 & & (0.0110) &  (0.0029) &  (0.0014) &  (0.0008) &  (0.0005) &  (0.0006) &  (0.0011) &  (0.0022) &   (0.0079) \\   
 & $\Delta$ TR & $0.1607^{***}$ &  $0.1583^{***}$ &  $0.1554^{***}$ &   $0.1521^{***}$ &  $0.1493^{***}$ &  $0.1583^{***}$ &  $0.1666^{***}$ &  $0.1790^{***}$ &   $0.1986^{***}$ \\ 
 & &  (0.0126) &  (0.0036) &  (0.0018) &  (0.0011) &  (0.0006) &  (0.0008) &  (0.0014) &  (0.0028) &   (0.0097) \\ 
 \hline & \\[-1.5ex]
 \multirow{8}{*}{4 y} & \mbox{Return} & $-9.7164^{***}$ & $-5.0982^{***}$ & $-3.6035^{***}$ & $-2.7001^{***}$ & $-2.0107^{***}$ & $-2.7411^{***}$ & $-3.8935^{***}$ & $-6.4579^{***}$ & $-12.8842^{***}$ \\ 
 & & (0.8356) &  (0.2724) &  (0.1548) &  (0.1011) &  (0.0653) &  (0.0845) &  (0.1288) &  (0.2316) &   (0.7601) \\  
 &  $\Delta$ Vol &  $-4.6445^{***}$ & $-1.7249^{***}$ &  $-0.9787^{***}$ & $-0.4528^{***}$ & $-0.0575^{***}$ &  $0.2619^{***}$ &  $0.7880^{***}$ &  $1.6494^{***}$ &   $4.3665^{***}$ \\ 
 & & (0.0424) &  (0.0153) &  (0.0093) &  (0.0063) &  (0.0041) &  (0.0052) &  (0.0075) &  (0.0124) &   (0.0358) \\  
 & $\Delta$ MRI & $0.4402^{***}$ &  $0.4031^{***}$ &   $0.4047^{***}$ &  $0.4008^{***}$ &  $0.4016^{***}$ &   $0.4090^{***}$ &  $0.4380^{***}$ &  $0.4710^{***}$ &   $0.5695^{***}$ \\ 
 & & (0.0119) &  (0.0032) &  (0.0016) &  (0.0009) &  (0.0005) &  (0.0007) &  (0.0012) &  (0.0023) &   (0.0087) \\  
 & $\Delta$ TR & $0.1908^{***}$ &  $0.1864^{***}$ &  $0.1816^{***}$ &  $0.1788^{***}$ &  $0.1771^{***}$ &  $0.1873^{***}$ &  $0.2002^{***}$ &  $0.2213^{***}$ &   $0.2362^{***}$ \\  
 & & (0.0139) &  (0.0040) &  (0.0020) &  (0.0012) &  (0.0008) &  (0.0010) &   (0.0016) &  (0.0031) &   (0.0112) \\ 
 \hline & \\[-1.5ex]
 \multirow{8}{*}{5 y} & \mbox{Return} & $-11.4430^{***}$ & $-5.8459^{***}$ & $-4.2022^{***}$ & $-3.0990^{***}$ & $-2.4444^{***}$ & $-3.2634^{***}$ & $-4.5913^{***}$ & $-7.0986^{***}$ & $-14.0178^{***}$ \\ 
 & & (0.9349) &  (0.2954) &  (0.1726) &   (0.1101) &  (0.0720) &  (0.0908) &  (0.1408) &  (0.2307) &    (0.7192) \\  
 & $\Delta$ Vol & $-4.9478^{***}$ & $-1.9719^{***}$ & $-1.0746^{***}$ & $-0.4912^{***}$ & $-0.0566^{***}$ &  $0.2644^{***}$ &  $0.8293^{***}$ &  $1.7603^{***}$ &   $4.5607^{***}$ \\ 
 & & (0.0469) &  (0.0166) &  (0.0103) &   (0.0068) &  (0.0045) &  (0.0056) &  0.008272 &  (0.01237) &   (0.03361) \\   
 & $\Delta$ MRI & $0.4271^{***}$ &   $0.3940^{***}$ &  $0.4001^{***}$ &   $0.4037^{***}$ &   $0.4105^{***}$ &  $0.4188^{***}$ &  $0.4473^{***}$ &  $0.4818^{***}$ &   $0.5825^{***}$ \\ 
 & & (0.0129) &  (0.0033) &   (0.0017) &  (0.0010) &  (0.0006) &  (0.0007) &  (0.0012) &  (0.0022) &   (0.0080) \\  
 & $\Delta$ TR & $0.2106^{***}$ &  $0.2092^{***}$ &   $0.1978^{***}$ &  $0.1898^{***}$ &  $0.1839^{***}$ &  $0.1949^{***}$ &  $0.2127^{***}$ &  $0.2343^{***}$ &   $0.2505^{***}$ \\ 
 & & (0.0150) &  (0.0042) &  (0.0022) &  (0.0013) &  (0.0008) &  (0.0010) &  (0.0017) &  (0.0030) &   (0.0103) \\ 
 \hline & \\[-1.5ex]
 \multirow{8}{*}{7 y} & \mbox{Return} & $-13.7272^{***}$ & $-6.8163^{***}$ & $-4.6931^{***}$ & $-3.4492^{***}$ & $-2.7840^{***}$ & $-3.5714^{***}$ & $-4.9224^{***}$ & $-7.7603^{***}$ & $-15.7068^{***}$ \\ 
 & & (1.0001) &  (0.3354) &  (0.1851) &  (0.1176) &  (0.0831) &   (0.1011) &  (0.1497) &  (0.2427) &   (0.8547) \\  
 & $\Delta$ Vol & $-5.8357^{***}$ & $-2.2695^{***}$ & $-1.1539^{***}$ & $-0.5138^{***}$ & $-0.0590^{***}$ &  $0.3022^{***}$ &  $0.9254^{***}$ &  $2.0163^{***}$ &   $5.0416^{***}$ \\
 & & (0.0502) &  (0.0188) &  (0.0111) &  (0.0073) &  (0.0052) &  (0.0062) &  (0.0087) &  (0.0130) &   (0.0396) \\ 
 & $\Delta$ MRI & $0.3728^{***}$ &  $0.3363^{***}$ &  $0.3422^{***}$ &   $0.3419^{***}$ &  $0.3476^{***}$ &  $0.3536^{***}$ &  $0.3827^{***}$ &  $0.4242^{***}$ &   $0.5267^{***}$ \\ 
 & & (0.0129) &  0.003616 &  0.001758 &  0.001023 &  0.000689 &  0.000832 &  0.001282 &  0.002266 &   (0.0090) \\  
 & $\Delta$ TR & $0.2736^{***}$ &   $0.2676^{***}$ &  $0.2531^{***}$ &   $0.2533^{***}$ &   $0.2480^{***}$ &  $0.2623^{***}$ &  $0.2848^{***}$ &  $0.3086^{***}$ &   $0.3351^{***}$ \\
 & & (0.0157) &  (0.0047) &  (0.0023) &  (0.0014) &  (0.0009) &  (0.0011) &  (0.0018) &  (0.0031) &   (0.0121) \\   
  \hline & \\[-1.5ex]
 \multirow{8}{*}{10 y} & \mbox{Return} & $-14.6564^{***}$ & $-7.4115^{***}$ & $-5.1175^{***}$ & $-3.7870^{***}$ &  $-2.9852^{***}$ & $-3.5927^{***}$ &  $-5.1934^{***}$ & $-8.5166^{***}$ & $-16.3607^{***}$ \\ 
 & & (0.9858) &   (0.3367) &  (0.1842) &  (0.1188) &  (0.0819) &  (0.0943) &  (0.1479) &   (0.2567) &   (0.7713) \\  
 & $\Delta$ Vol & $-5.7388^{***}$ & $-2.3635^{***}$ & $-1.1795^{***}$ & $-0.5289^{***}$ & $-0.0846^{***}$ &  $0.2555^{***}$ &  $0.8601^{***}$ &   $2.1040^{***}$ &   $5.3533^{***}$ \\ 
 & & (0.0492) &  (0.0189) &  (0.0110) &  (0.0074) &  (0.0051) &  (0.0058) &  (0.0086) &  (0.0137) &   (0.0352) \\  
 & $\Delta$ MRI & $0.3569^{***}$ &  $0.3195^{***}$ &  $0.3240^{***}$ &  $0.3179^{***}$ &  $0.3184^{***}$ &  $0.3283^{***}$ &  $0.3515^{***}$ &  $0.3881^{***}$ &   $0.4769^{***}$ \\ 
 & & (0.0124) &  (0.0035) &  (0.0017) &  (0.0010) &  (0.0006) &  (0.0007) &  (0.0012) &  (0.0023) &   (0.0080) \\  
 & $\Delta$ TR & $0.258281^{***}$ &  $0.260511^{***}$ &  $0.239783^{***}$ &  $0.239518^{***}$ &  $0.238234^{***}$ &  $0.250074^{***}$ &  $0.273796^{***}$ &  $0.306784^{***}$ &   $0.349737^{***}$ \\ 
 & & (0.0152) &  (0.0046) &  (0.0023) &  (0.0014) &  (0.0009) &  (0.0010) &  (0.0017) &  (0.0032) &   (0.0107) \\   \hline & \\[-1.5ex]
 \multirow{8}{*}{20 y} & \mbox{Return} & $-14.7697^{***}$ & $-7.1142^{***}$ & $-4.6905^{***}$ & $-3.2396^{***}$ & $-2.6567^{***}$ &  $-3.0845^{***}$ & $-4.5915^{***}$ & $-7.8075^{***}$ & $-16.1003^{***}$ \\ 
 & & (0.9495) &   (0.2864) &  (0.1589) &   (0.1026) &  (0.0728) &  (0.0831) &  (0.1223) &  (0.2288) &  (0.7432) \\  
 & $\Delta$ Vol & $-5.7421^{***}$ &  $-2.2686^{***}$ & $-1.0401^{***}$ & $-0.4137^{***}$ &  $-0.0417^{***}$ &   $0.2859^{***}$ &  $0.8837^{***}$ &  $1.9942^{***}$ &   $5.1096^{***}$ \\ 
 & & (0.0478) &   (0.0162) &  (0.0096) &  (0.0064) &  (0.0046) &  (0.0051) &  (0.0072) &  (0.0121) &   (0.0345) \\   
 & $\Delta$ MRI & $0.2214^{***}$ &  $0.2061^{***}$ &  $0.2078^{***}$ &  $0.2045^{***}$ &  $0.2028^{***}$ &  $0.2092^{***}$ &  $0.2313^{***}$ &  $0.2630^{***}$ &   $0.3280^{***}$ \\ 
 & & (0.0107) & (0.0027) &  (0.0013) &  (0.0007) &  (0.0005) &  (0.0006) &  (0.0009) & (0.0019) &   (0.0074) \\ 
 & $\Delta$ TR & $0.2256^{***}$ &  $0.2127^{***}$ &  $0.1953^{***}$ &  $0.1874^{***}$ &  $0.1894^{***}$ &  $0.2012^{***}$ &  $0.2276^{***}$ &  $0.2568^{***}$ &   $0.2922^{***}$ \\ 
 & & (0.0135) &  (0.0037) &  (0.0018) &  (0.0011) &  (0.0007) &  (0.0009) &  (0.0013) &  (0.0027) &   (0.0100) \\  \hline & \\[-1.5ex]
 \multirow{8}{*}{30 y} & \mbox{Return} & $-14.5589^{***}$ & $-7.2858^{***}$ & $-4.8345^{***}$ & $-3.2554^{***}$ &  $-2.571^{***}$ & $-3.0110^{***}$ &  $-4.4499^{***}$ & $-7.6243^{***}$ & $-15.3198^{***}$ \\  
 & & (0.9252) &  (0.2788) &  (0.1450) &  (0.0936) &  (0.0655) &  (0.0735) &  (0.1109) &  (0.2160) &   (0.7042) \\    
 & $\Delta$ Vol & $-5.9629^{***}$ & $-2.2174^{***}$ & $-0.9481^{***}$ & $-0.3754^{***}$ & $-0.0258^{***}$ &  $0.2872^{***}$ &  $0.8230^{***}$ &  $1.9192^{***}$ &   $5.3181^{***}$ \\ 
 & & (0.0469) &  (0.0158) &  (0.0088) &  (0.0059) &  (0.0042) &  (0.0046) &   (0.0065) &  (0.0115) &   (0.0332) \\  
 & $\Delta$ MRI & $0.1936^{***}$ &  $0.1713^{***}$ &  $0.1679^{***}$ &  $0.1601^{***}$ &  $0.1596^{***}$ &  $0.1634^{***}$ &  $0.1785^{***}$ &  $0.2107^{***}$ &   $0.2743^{***}$ \\ 
 & & (0.0099) &  (0.0025) &   (0.0011) &  (0.0006) &  (0.0004) &  (0.0005) &  (0.0008) &  (0.0018) &   (0.0070) \\  
 & $\Delta$ TR & $0.1893^{***}$ &  $0.1715^{***}$ &  $0.1559^{***}$ &  $0.1541^{***}$ &   $0.1524^{***}$ &  $0.1661^{***}$ &  $0.1855^{***}$ &  $0.2130^{***}$ &   $0.2449^{***}$ \\  
 & & (0.0124) &  (0.0034) &  (0.0016) &  (0.0010) &  (0.0006) &  (0.0007) &  (0.0012) &  (0.0025) &   (0.0091) \\  \hline 
\end{tabular}}
\begin{tablenotes}
      \small
      \item $^{***} p < 0.001$
\end{tablenotes}
\end{adjustwidth}
    \caption{This table presents the results of the panel regression for the \textit{jump-diffusion} model, where the dependent variable is the first-difference model implied CDS spread ($\Delta \mathcal{\mbox{CDS}}_{i,t}^{model}$) computed between two consecutive days. The columns contains the estimates for the quantiles $\tau \in \{0.1,...,0.9\}$. Standard errors are reported between brackets ().}
    \label{tab:jump-diffusion-deltaCDSmodel}
\end{table}
\newpage  
\hspace{1in}
\begin{table}[H]
 \renewcommand{\arraystretch}{0.5}
 \begin{adjustwidth}{-0.5cm}{}
 \scalebox{0.73}{%
\begin{tabular}{l|l|ccccccccc} \hline & \\[-1.5ex]
& & 0.1 &        0.2 &       0.3 &     0.4 &     0.5 &      0.6 &      0.7 &      0.8 &       0.9 \\ \hline  \hline & \\[-1.5ex] 
 \multirow{6}{*}{6 m} & \mbox{Return} &  $-0.9614^{**}$ &  $-0.0350^{\diamond}$ & $-0.0001^{**}$ &  $-0.00002^{\diamond}$ &   $0.000002^{\diamond}$ &  $0.000003^{\diamond}$ &  $0.000004^{\diamond}$ &  $-0.00001^{\diamond}$ &    $0.0009^{\diamond}$ \\  
& & (0.2930) &   (0.0175) &  (0.00007) &  (0.00001) &   (0.000007) &  (0.000007) &  (0.000007) &   (0.0001) &   (0.0477) \\  
& \mbox{Vol} & $-2.4894^{***}$ &   $0.0539^{***}$ & $-0.00003^{***}$ & $-0.00002^{***}$ &  $-0.000005^{***}$ &  $0.000003^{***}$ &   $0.00002^{***}$ &   $0.00063^{***}$ &   $0.0611^{**}$ \\  
& & (0.0268) &   (0.0017) &  (0.000007) &  (0.000001) &   (0.000001) &  (0.000001) &  (0.000001) &   (0.00001) &   (0.0051) \\   
& \mbox{TR} & $-0.0377^{***}$ &  $-0.0103^{***}$ & $-0.00001^{***}$ & $-0.000001^{***}$ &       $0.0^{***}$ &      $0.0^{***}$ &      $0.0^{***}$ &  $-0.000004^{***}$ &  $-0.0003^{***}$ \\ 
& & (0.0007) &   (0.00005) &       (0.0) &       (0.0) &        (0.0) &       (0.0) &       (0.0) &        (0.0) &   (0.0001) \\    \hline & \\[-1.5ex]
\multirow{6}{*}{1 y} & \mbox{Return} & $-0.5000^{\diamond}$ &  $-0.0292^{\diamond}$ & $-0.366554^{***}$ & $-0.0002^{\diamond}$ &  $-0.0393^{\diamond}$ & $-1.0287^{***}$ & $-1.4334^{***}$ & $-1.7115^{***}$ &  $-2.3132^{*}$ \\  
& & (2.0729) &   (0.2365) &  (0.0797) &  (0.0464) &   (0.0426) &   (0.2276) &  (0.2666) &   (0.2746) &   (1.1636) \\  
& \mbox{Vol} & $-9.0090^{***}$ &   $-1.3912^{***}$ & $-0.2472^{***}$ & $-0.00005^{\diamond}$ &   $0.001489^{\diamond}$ &  $0.4965^{***}$ &  $1.5059^{***}$ &   $2.6191^{***}$ &   $5.774416^{***}$ \\  
& & (0.2043) &   (0.0230) &  (0.0079) &  (0.0046) &   (0.0042) &  (0.0230) &  (0.0270) &   (0.0277) &   (0.1163) \\    
& \mbox{TR} & $0.039776^{***}$ &    $0.00185^{**}$ &  $0.001208^{***}$ &      $0.0^{\diamond}$ &   $0.0001^{\diamond}$ &  $0.0014^{\diamond}$ & $-0.000176^{***}$ &   $0.007189^{***}$ &   $0.009777^{**}$ \\ 
& & (0.0053) &   (0.0005) &  (0.0002) &  (0.0001) &   (0.0001) &   (0.0005) &  (0.0006) &   (0.0007) &   (0.0029) \\    \hline & \\[-1.5ex]
 \multirow{6}{*}{2 y} & \mbox{Return} & $1.8717^{\diamond}$ &   $1.2692^{\diamond}$ &  $0.3357^{\diamond}$ & $-0.0186^{\diamond}$ &     $0.0374^{\diamond}$ &  $0.01779^{\diamond}$ & $-0.0514^{\diamond}$ &  $-0.3220^{\diamond}$ &   $-1.5636^{\diamond}$ \\ 
& & (2.5714) &   (0.7613) &  (0.2972) &  (0.1339) &   (0.1088) &  (0.1296) &  (0.4283) &   (0.3963) &   (0.8213) \\  
& \mbox{Vol} & $-13.6256^{***}$ &  $-4.5514^{***}$ & $-1.2938^{***}$ & $-0.3106^{***}$ &   $-0.1369^{***}$ & $-0.1881^{***}$ &  $1.00632^{***}$ &   $2.1511^{***}$ &   $6.9873^{***}$ \\ 
& & (0.3022) &   (0.0914) &  (0.0342) &  (0.0152) &   (0.0124) &  (0.0148) &  (0.0490) &   (0.0464) &   (0.1066) \\    
& \mbox{TR} & $0.0605^{***}$ &   $0.02805^{***}$ &  $0.01205^{***}$ &  $0.0069^{***}$ &   $0.0054^{***}$ &  $0.0098^{***}$ &  $0.0092^{***}$ &   $0.0093^{***}$ &  $-0.0161^{***}$ \\ 
& & (0.0056) &   (0.0016) &  (0.0006) &  (0.0002) &   (0.0002) &  (0.0002) &  (0.0008) &   0.0008) &   (0.0018)  \\   \hline & \\[-1.5ex]
 \multirow{6}{*}{3 y} & \mbox{Return} & $4.8017^{*}$ &   $4.4491^{***}$ &  $2.6799^{***}$ &  $2.7078^{***}$ &   $2.4222^{***}$ &  $1.3505^{***}$ &  $0.8565^{***}$ &   $1.3805^{***}$ &   $0.4613^{\diamond}$ \\  
& & (2.5142) &   (0.6963) &  (0.4859) &  (0.4142) &   (0.3829) &  (0.2543) &  (0.2263) &   (0.3366) &   (0.8128) \\ 
& \mbox{Vol} & $-9.6000^{***}$ &  $-3.9788^{***}$ &  $-2.1711^{***}$ & $-1.6477^{***}$ &  $-0.9967^{***}$ &  $0.5893^{***}$ &  $1.0509^{***}$ &   $2.0409^{***}$ &   $6.8891^{***}$ \\ 
& & (0.3248) &   (0.0874) &   (0.0604) &  (0.0519) &   (0.0483) &  (0.0324) &  (0.0289) &   (0.0440) &   (0.1105) \\  
& \mbox{TR} & $-0.004047$ &  $-0.0173^{***}$ & $-0.0184^{***}$ & $-0.0133^{***}$ &  $-0.0128^{***}$ & $-0.0203^{***}$ & $-0.0149^{***}$ &  $-0.0173^{***}$ &  $-0.0373^{***}$ \\ 
& & (0.0043) &   (0.0011) &  (0.0008) &  (0.0006) &    (0.0006) &  (0.0004) &  (0.0003) &   (0.0005) &   (0.0014) \\   \hline & \\[-1.5ex]
 \multirow{6}{*}{4 y} & \mbox{Return} & $0.7850^{\diamond}$ &   $0.497041^{*}$ &  $0.185274^{*}$ &  $0.002744^{*}$ &   $-0.00017^{\diamond}$ &   $0.00981^{\diamond}$ & $-0.548479^{\diamond}$ &  $-0.539229^{\diamond}$ &  $-1.559522^{\diamond}$ \\  
& & (0.7707) &   (0.2488) &   (0.0938) &  (0.0802) &   (0.0848) &  (0.1146) &  (0.2766) &   (0.3805) &     (0.7517) \\ 
& \mbox{Vol} & $-4.5027^{***}$ &  $-1.5483^{***}$ & $-0.6131^{***}$ & $-0.1837^{***}$ &  $-0.0926^{***}$ & $-0.0235^{*}$ &  $0.7201^{***}$ &   $2.2797^{***}$ &   $6.4748^{***}$ \\  
& & (0.1170) &   (0.0358) &   (0.0134) &  (0.0113) &   (0.0117) &  (0.0156) &  (0.0373) &   (0.0484) &   (0.0940) \\  
& \mbox{TR} & $0.01441^{***}$ &   $0.0064^{***}$ &  $0.0037^{***}$ &  $0.0013^{***}$ &   $0.0009^{***}$ &  $0.0026^{***}$ &  $0.0017^{***}$ &  $-0.0017^{***}$ &  $-0.0087^{***}$ \\  
& & (0.0011) &   (0.0003) &  (0.0001) &  (0.0001) &   (0.0001) &  (0.0001) &  (0.0003) &   (0.0004) &   (0.0009) \\   \hline & \\[-1.5ex]
 \multirow{6}{*}{5 y} & \mbox{Return} & $5.21413^{*}$ &   $0.975125^{*}$ & $-0.224915^{\diamond}$ & $-0.261743^{\diamond}$ &  $-0.657334^{**}$ &  $-0.78973^{*}$ & $-1.167283^{***}$ &  $-1.543307^{***}$ &  $-2.259644^{**}$ \\ 
& & (2.5557) &   (0.5411) &  (0.3899) &  (0.2167) &   (0.2479) &  (0.3236) &  (0.3037) &   (0.4103) &    (0.8594) \\   
& \mbox{Vol} & $-31.4475^{***}$ &  $-7.1585^{***}$ & $-1.7761^{***}$ & $-0.2479^{***}$ &   $0.0440^{\diamond}$ &   $0.7375^{***}$ &  $0.9706^{***}$ &   $3.2140^{***}$ &   $6.7653^{***}$ \\ 
& & (0.4325) &   (0.0827) &  (0.0573) &  (0.0316) &   (0.0356) &    (0.0461) &   (0.0414) &   (0.0529) &   (0.1084) \\   
& \mbox{TR} & $0.1200^{***}$ &   $0.0265^{***}$ &  $0.00714^{***}$ &  $0.00378^{***}$ &   $0.0050^{***}$ &  $0.0059^{***}$ &  $0.0126^{***}$ &   $0.0107^{***}$ &   $0.0189^{***}$ \\  
& & (0.0035) &   (0.0006) &  (0.0004) &  (0.0002) &   0.0002) &  (0.0003) &   (0.0003) &   (0.0004) &   (0.0008) \\   \hline & \\[-1.5ex]
 \multirow{6}{*}{7 y} & \mbox{Return} & $8.2547^{**}$ &   $0.7634^{\diamond}$ & $-2.0723^{*}$ &  $-0.9384^{\diamond}$ &  $-4.1131^{***}$ & $-4.6397^{***}$ & $-3.8456^{***}$ &  $-5.7417^{***}$ &  $-7.2098^{***}$ \\ 
& & (3.0663) &   (1.6973) &  (1.1137) &  (0.6278) &   0.8715) &  (0.9780) &  (1.0697) &   (1.3881) &   (1.6044) \\  
& \mbox{Vol} & $-58.1959^{***}$ & $-17.2456^{***}$ & $-5.5184^{***}$ &  $0.8484^{***}$ &   $3.3317^{***}$ &  $7.1436^{***}$ &   $8.1941^{***}$ &   $9.7285^{***}$ &  $14.4300^{***}$ \\ 
& & (0.5367) &   (0.2884) &  (0.1812) &  (0.0991) &   (0.1339) &   (0.1448) &  (0.1490) &   (0.1799) &   (0.2011) \\   
& \mbox{TR} & $0.1481^{***}$ &   $0.0473^{***}$ &  $0.0093^{***}$ & $-0.0028^{***}$ &   $0.0011^{\diamond}$ & $-0.0008^{\diamond}$ &  $0.0136^{***}$ &   $0.0295^{***}$ &    $0.0421^{***}$ \\  
& & (0.0031) &   (0.0017) &  (0.0010) &  (0.0005) &   (0.0008) &  (0.0008) &  (0.0009) &   (0.0011) &   (0.00123) \\   \hline & \\[-1.5ex]
 \multirow{6}{*}{10 y} & \mbox{Return} & $0.7107^{\diamond}$ &  $-2.3099^{\diamond}$ & $-4.1078^{*}$ & $-2.7753^{***}$ &  $-2.1724^{}$ & $-2.8258^{***}$ & $-2.7153^{*}$ &  $-4.5127^{**}$ &  $-1.7540^{\diamond}$ \\  
& & (5.3367) &   (2.8837) &  (2.3708) &  (1.1838) &   (0.6202) &  (0.8541) &  (1.2523) &   (1.6100) &    (2.3109) \\ 
& \mbox{Vol} & $-66.5803^{***}$ & $-13.4016^{***}$ & $-3.8290^{***}$ &  $2.3298^{***}$ &   $1.8154^{***}$ &  $6.5870^{***}$ &  $8.3139^{***}$ &  $11.7102^{***}$ &   $1.4856^{***}$ \\  
& & (0.7702) &   (0.4435) &  (0.3890) &  (0.1920) &   (0.0986) &  (0.1311) &  (0.1874) &   (0.2304) &   (0.3735) \\    
& \mbox{TR} & $0.0400^{***}$ &  $-0.0717^{***}$ &   $-0.0417^{***}$ & $-0.0244^{***}$ &  $-0.0054^{***}$ & $-0.0116^{***}$ &  $0.0076^{***}$ &   $0.0221^{***}$ &   $0.1431^{***}$ \\ 
& & (0.0040) &   (0.0023) &   (0.0020) &  (0.0010) &   (0.0005) &  (0.0006) &  (0.0009) &   (0.0012) &   (0.0019) \\    \hline & \\[-1.5ex]
\multirow{6}{*}{20 y} & \mbox{Return} & $28.8820^{**}$ &    $2.5063^{\diamond}$ &  $0.2324^{\diamond}$ & $-0.8500^{\diamond}$ &   $0.0291^{\diamond}$ &  $0.4854^{\diamond}$ &  $2.1068^{\diamond}$ &   $1.0833^{\diamond}$ &    $1.4380^{\diamond}$ \\ 
& & (9.1559) &   (4.4486) &  (2.8916) &  (1.7687) &   (1.6311) &  (1.4683) &  (1.3019) &   (1.3198) &   (1.3672) \\ 
& \mbox{Vol} & $-81.2161^{***}$ & $-17.2253^{***}$ &   $4.0578^{***}$ &  $8.2122^{***}$ &   $9.6627^{***}$ &  $8.2464^{***}$ &  $8.3767^{***}$ &  $10.3641^{***}$ &   $8.2334^{***}$ \\  
& & (1.2334) &   (0.6497) &  (0.4547) &  (0.2860) &   (0.2607) &  (0.2357) &  (0.2096) &   (0.2162) &    (0.2384) \\   
& \mbox{TR} & $0.04070^{***}$ &  $-0.0776^{***}$ & $-0.0810^{***}$ & $-0.0602^{***}$ &  $-0.0508^{***}$ & $-0.0363^{***}$ & $-0.0274^{***}$ &  $-0.0225^{***}$ &    $0.0099^{***}$ \\  
& & (0.0062) &   (0.0032) &  (0.0022) &  (0.0014) &   (0.0012) &  (0.0011) &  (0.0010) &   (0.0010) &   (0.0011) \\    \hline & \\[-1.5ex]
\multirow{6}{*}{30 y} & \mbox{Return} & $42.7162^{***}$ &  $14.7453^{**}$ &  $7.3467^{*}$ &  $1.7350^{\diamond}$ &  $-0.6425^{\diamond}$ & $-0.0578^{\diamond}$ &  $0.5922^{\diamond}$ &    $0.2169^{\diamond}$ &   $1.3834^{**}$ \\ 
& & (12.9178) &   (5.4523) &  (2.9647) &  (2.7206) &   (2.5582) &  (1.8428) &  (0.5233) &   (0.4528) &   (0.6551) \\  
& \mbox{Vol}& $-128.7750^{***}$ & $-31.0030^{***}$ & $-6.6050^{***}$ &  $9.0796^{***}$ &  $14.4791^{***}$ &  $9.5935^{***}$ &  $2.7888^{***}$ &   $0.3753^{***}$ &   $5.1133^{***}$ \\  
& & (1.7104) &   (0.7795) &  (0.4573) &  (0.4358) &  (0.4087) &  (0.2861) & (0.0795) &   (0.0744) &   (0.1183) \\  
& \mbox{TR} & $0.0032^{\diamond}$ &  $-0.1559^{***}$ & $-0.1590^{***}$ & $-0.1758^{***}$ &  $-0.1549^{***}$ & $-0.0823^{***}$ & $-0.0195^{***}$ &  $-0.0021^{***}$ &  $-0.0204^{***}$ \\  
& & (0.0084) &  (0.0037) &  (0.0021) &  (0.0020) &  (0.0019) &  (0.0013) &  (0.0003) &   (0.0003) &   (0.0005) \\  
\hline
\end{tabular}}
\begin{tablenotes}
      \small
      \item $^{***} p < 0.001$, $^{**} p < 0.01$, $^{*} p < 0.1$, $^{\diamond} \  0.1<p< 1 $
    \end{tablenotes}
    \end{adjustwidth}
    \caption{This table presents the results of the panel regression for the \textit{jump-diffusion} model, where the dependent variable is the error $\mathcal{E}_{i,t}$ between model and market CDS. The columns contains the estimates for the quantiles $\tau \in \{0.1,...,0.9\}$. Standard errors are reported between brackets ().}
    \label{tab:jump-diffusion-errorcalib}
\end{table}
\newpage
\hspace{1in}

\begin{table}[H]
 \renewcommand{\arraystretch}{0.5}
 \begin{adjustwidth}{-0.5cm}{}
 \scalebox{0.73}{
\begin{tabular}{l|l|ccccccccc} \hline & \\[-1.5ex]
& & 0.1 &        0.2 &       0.3 &     0.4 &     0.5 &      0.6 &      0.7 &      0.8 &       0.9 \\ \hline  \hline & \\[-1.5ex] 
\multirow{8}{*}{6 m} & \mbox{Return} &  $3.1055^{\diamond}$ &   $4.0041^{\diamond}$ &    $2.3562^{\diamond}$ &   $1.8887^{\diamond}$ &   $1.1614^{\diamond}$ &   $0.3790^{\diamond}$ &   $1.1465^{\diamond}$ &  $-0.0997^{\diamond}$ &   $0.0223^{\diamond}$ \\ 
& & (6.1933) &   (4.6106) &   (2.1093) &    (1.7640) &    (1.1208) &   (0.8851) &   (0.6865) &   (0.6457) &   (0.5893) \\  
& $\Delta$ Vol & $-252.2250^{***}$ & $-70.5118^{***}$ & $-43.6450^{***}$ & $-31.3166^{***}$ & $-21.7309^{***}$ & $-16.6341^{***}$ & $-12.8272^{***}$ &  $-8.6226^{***}$ &  $-4.9525^{***}$ \\ 
& & (0.3538) &   (0.2844) &   (0.1323) &   (0.1128) &   (0.0727) &   (0.0583) &   (0.0457) &   (0.0434) &   (0.0393) \\     
& $\Delta$ MRI & $-0.6949^{***}$ &  $-0.4702^{**}$ &  $-0.344013^{***}$ &  $-0.3260^{***}$ &  $-0.2174^{***}$ &  $-0.2156^{***}$ &   $-0.1851^{***}$ &   $-0.1981^{***}$ &  $-0.1701^{***}$ \\ 
& & (0.1973) &   (0.1444) &   (0.0566) &   (0.0418) &    (0.0235) &   (0.0165) &   (0.0112) &   (0.0089) &    (0.0071) \\ 
& $\Delta$ TR & $-0.1276^{\diamond}$ &  $-0.2353^{*}$ &  $-0.1569^{**}$ &  $-0.1882^{***}$ &  $-0.2093^{***}$ &   $-0.1997^{***}$ &  $-0.1477^{***}$ &  $-0.1259^{***}$ &  $-0.0801^{***}$ \\ 
& & (0.1416) &   (0.1075) &   (0.0484) &   (0.0405) &   (0.0256) &   (0.0198) &   (0.0149) &   (0.0134) &   (0.0116) \\   \hline & \\[-1.5ex]
\multirow{8}{*}{1 y} & Return & $8.9587^{\diamond}$ &   $5.5347^{\diamond}$ &   $4.2768^{\diamond}$ &   $3.9028^{*}$ &    $2.9270^{*}$ &    $1.5977^{\diamond}$ &   $0.3075^{\diamond}$ &  $-0.2208^{\diamond}$ &   $0.3661^{\diamond}$ \\  
& & (7.1282) &   (4.7106) &   (2.6958) &   (1.9723) &   (1.2941) &    (1.0615) &   (0.7826) &   (0.7758) &   (0.8562) \\  
& $\Delta$ Vol & $-205.6918^{***}$ & $-71.8713^{***}$ & $-46.1858^{***}$ &  $-32.9265^{***}$ & $-22.9435^{***}$ & $-16.9662^{***}$ & $-12.3693^{***}$ &  $-9.4609^{***}$ &  $-4.5862^{***}$ \\ 
& & (0.404351 &   (0.285805 &   (0.1685) &   (0.1256) &   (0.0839) &   (0.0701) &   (0.0526) &   (0.0528) &   (0.0589) \\  
& $\Delta$ MRI & $-0.322135^{*}$ &  $-0.218918^{*}$ &  $-0.2002^{***}$ &  $-0.2232^{***}$ &  $-0.1914^{***}$ &  $-0.1078^{***}$ &  $-0.0959^{***}$ &   $0.0168^{\diamond}$ &    $0.0630^{***}$ \\ 
& & (0.1783) &   (0.1148) &   (0.0575) &   (0.0378) &   (0.0225) &   (0.0173) &   (0.0125) &   (0.0134) &   (0.0182) \\  
& $\Delta$ TR & $-0.0744^{\diamond}$ &   $-0.1773^{*}$ &  $-0.16506^{***}$ &  $-0.2017^{***}$ &  $-0.1870^{***}$ &  $-0.2004^{***}$ &  $-0.1742^{***}$ &  $-0.1649^{***}$ &   $-0.0951^{***}$ \\ 
& & (0.1435) &   (0.0932) &   (0.0523) &   (0.0384) &   (0.0246) &   (0.0193) &   (0.0134) &   (0.0126) &   (0.0128) \\   \hline & \\[-1.5ex]
\multirow{8}{*}{2 y} & Return & $0.812715^{\diamond}$ &   $3.2147^{\diamond}$ &   $3.1538^{\diamond}$ &   $0.3426^{\diamond}$ &   $0.2168^{\diamond}$ &  $-0.1906^{\diamond}$ &  $-0.6844^{\diamond}$ &  $-1.0564^{\diamond}$ &   $-0.7422^{\diamond}$ \\  
& & (3.1085) &   (2.5786) &   (1.9360) &      (1.5240) &    (1.1930) &   (1.0468) &   (1.0290) &   (1.1760) &   (1.2994) \\ 
& $\Delta$ Vol & $-57.6960^{***}$ & $-41.7341^{***}$ & $-32.1719^{***}$ &  $-24.8106^{***}$ &  $-20.1381^{***}$ & $-15.6891^{***}$ & $-11.2799^{***}$ &  $-6.7300^{***}$ &   $0.6516^{***}$ \\  
& & (0.1838) &   (0.1609) &   (0.1218) &   (0.0962) &   (0.0759) &   (0.0667) &   (0.0654) &   (0.0745) &   (0.0791)\\   
& $\Delta$ MRI & $-0.0796^{\diamond}$ &  $-0.0295^{\diamond}$ &   $0.0227^{\diamond}$ &   $0.0476^{*}$ &   $0.0518^{***}$ &   $0.0406^{***}$ &   $0.0568^{***}$ &   $0.0658^{***}$ &   $0.1228^{***}$ \\ 
& & (0.0487) &   (0.0397) &   (0.0268) &   (0.01997) &   (0.0153) &   (0.0137) &   (0.0142) &   (0.0179) &   (0.0239) \\ 
& $\Delta$ TR & $-0.0644^{\diamond}$ &  $-0.0926^{*}$ &  $-0.1478^{***}$ &  $-0.2204^{***}$ &  $-0.1896^{***}$ &  $-0.1909^{***}$ &  $-0.1881^{***}$ &  $-0.1286^{***}$ &   $-0.0858^{***}$ \\ 
& & (0.0540) &   (0.0439) &   (0.0322) &    (0.0246) &   (0.0185) &   (0.0157) &   (0.0148) &   (0.0168) &   (0.0187)  \\   \hline & \\[-1.5ex]
\multirow{8}{*}{3 y} & Return & $-2.4544^{*}$ &  $-1.3447^{\diamond}$ &  $-0.9795^{\diamond}$ &  $-1.9624^{*}$ &  $-1.3671^{\diamond}$ &  $-1.2944^{\diamond}$ &  $-0.00002^{\diamond}$ &  $-0.000003^{\diamond}$ &  $-0.1707^{\diamond}$ \\ 
& & (1.3012) &    (1.2575) &   (0.9699) &   (0.9791) &   (1.0420) &   (1.0314) &  (0.4189) &   (0.3595) &   (1.8849) \\ 
& $\Delta$ Vol & $-26.1262^{***}$ & $-25.94676^{***}$ & $-23.5189^{***}$ & $-18.7001^{***}$ & $-15.4271^{***}$ &  $-8.240209^{\diamond}$ &  $-0.00001^{\diamond}$ &   $0.000001^{\diamond}$ &  $10.6677^{***}$ \\  
& & (0.0754) &   (0.0777) &   (0.0603) &  (0.0617) &   (0.0656) &   (0.0644) &   (0.0257) &   (0.0215) &   (0.1029) \\  
& $\Delta$ MRI & $-0.0317^{***}$ &   $0.0011^{\diamond}$ &   $0.0262^{***}$ &   $0.014433^{*}$ &   $0.0285^{***}$ &   $0.0241^{***}$ &        $0.0^{***}$ &        $0.0^{\diamond}$ &   $0.0015^{\diamond}$ \\ 
& & (0.0104) &   (0.0100) &    (0.0082) &   (0.0086) &   (0.0099) &   (0.0105) &    (0.0043) &   (0.0040) &   (0.0215) \\  
& $\Delta$ TR & $-0.173613^{***}$ &  $-0.1577^{***}$ &  $-0.1734^{***}$ &    $-0.1658^{***}$ &  $-0.1652^{***}$ &  $-0.1343^{***}$ &  $-0.000001^{\diamond}$ &       $0.0^{\diamond}$ &  $-0.008198^{\diamond}$ \\ 
& & (0.0211) &   (0.0176) &   (0.01336) &   (0.0130) &   (0.0132) &   (0.0126) &   (0.0048)&   (0.0042) &   (0.0219) \\   \hline & \\[-1.5ex]
\multirow{8}{*}{4 y} & Return & $-2.3510^{*}$ &  $-2.8125^{*}$ &  $-1.9876^{*}$ &  $-1.9116^{*}$ &  $-2.1913^{**}$ &   $-0.00005^{\diamond}$ &  $-4.1577^{***}$ &  $-4.0595^{*}$ &  $-3.9568^{*}$ \\  
& & (1.3822) &   (1.2432) &   (1.1208) &   (0.9088) &   (0.7883) &   (0.3988) &   (1.2386) &   (1.7229) &   (2.0509) \\  
& $\Delta$ Vol & $-17.8775^{***}$ & $-19.0492^{***}$ & $-14.2031^{***}$ &   $-9.311^{***}$ &  $-5.2140^{***}$ &  $-0.00001^{\diamond}$ &   $2.9852^{***}$ &  $14.8115^{***}$ &  $35.7578^{***}$ \\
& & (0.0860) &   (0.0780) &   (0.0708) &   (0.0575) &   (0.0495) &   (0.0247) &  (0.074861) &    (0.1001) &   (0.1060) \\ 
& $\Delta$ MRI & $0.0206^{*}$ &   $0.012609^{\diamond}$ &   $0.0477^{***}$ &   $0.0254^{***}$ &    $0.0415^{***}$ &   $0.000002^{\diamond}$ &   $0.0720^{***}$ &   $0.0459^{*}$ &   $0.0704^{**}$ \\ 
& & (0.0087) &   (0.0088) &   (0.0086) &   (0.0076) &   (0.0071) &   (0.0039) &   (0.0131) &     (0.0196) &   (0.0251) \\ 
& $\Delta$ TR & $-0.1175^{***}$ &  $-0.1096^{***}$ &  $ -0.1080^{***}$ &   $-0.1012^{***}$ &  $-0.0986^{***}$ &  $-0.000003^{\diamond}$ &  $-0.0791^{***}$ &  $-0.0345^{\diamond}$ &  $-0.02479^{\diamond}$ \\ 
& & (0.0192) &   (0.0165) &   (0.0145) &    (0.0115) &    (0.0097) &    (0.0048) &   (0.0149) &   (0.0209) &   (0.0262)   \\   \hline & \\[-1.5ex]
\multirow{8}{*}{5 y} & Return & $-0.0067^{\diamond}$ &  $-0.0301^{\diamond}$ &  $-0.000006^{\diamond}$ &  $-0.000003^{\diamond}$ &  $-0.0034^{\diamond}$ &  $-0.1061^{\diamond}$ &  $-0.1736^{\diamond}$ &   $-0.2643^{\diamond}$ &  $-0.3338^{\diamond}$ \\ 
& & (0.6872) &   (0.4328) &   (0.1026) &   (0.1153) &   (0.1387) &   (0.3506) &   (0.3659) &   (0.5416) &   (0.8066)\\ 
& $\Delta$ Vol & $-13.5941^{***}$ &  $-2.1323^{***}$ &  $-0.000008^{\diamond}$ &   $0.000001^{\diamond}$ &   $0.1432^{***}$ &   $5.0340^{***}$ &   $9.7155^{***}$ &     $19.419^{***}$ &  $33.1278^{***}$ \\ 
& & (0.0441) &   (0.0281) &   (0.0066) &   (0.0073) &   (0.0087) &   (0.0218) &   (0.0222) &   (0.0313) &   (0.0421)\\  
& $\Delta$ MRI & $0.0055^{\diamond}$ &   $0.0012^{\diamond}$ &       $0.0^{\diamond}$ &        $0.0^{\diamond}$ &   $0.0001^{\diamond}$ &   $0.0019^{\diamond}$ &   $0.0033^{\diamond}$ &    $0.0088^{\diamond}$ &    $0.0353^{\diamond}$ \\  
& & (0.0063) &   (0.0036) &   (0.0008) &   (0.000964 &   (0.0012) &   (0.0032) &   (0.0036) &   (0.0061) &    (0.0101)  \\
& $\Delta$ TR & $-0.005616^{\diamond}$ &  $-0.0010^{\diamond}$ &      $0.0^{\diamond}$ &       $0.0^{\diamond}$ &  $0.00007^{\diamond}$ &  $0.0012^{\diamond}$ &  $-0.0016^{\diamond}$ &  $-0.0038^{\diamond}$ &  $-0.0091^{\diamond}$ \\
& & (0.0086) &   (0.0054) &   (0.0012) &   (0.0014) &   (0.0016) &   (0.0041) &   (0.0043) &   (0.0068) &   (0.0107)  \\   \hline & \\[-1.5ex]
\multirow{8}{*}{7 y} & Return & $0.03120^{\diamond}$ &  $-0.1883^{\diamond}$ &   $-0.00001^{\diamond}$ &  $-1.5935^{\diamond}$ &  $-0.5418^{\diamond}$ &  $-0.5164^{\diamond}$ &  $-0.1722^{\diamond}$ &   $-0.7829^{\diamond}$ &  $-0.4409^{\diamond}$ \\ 
& & (0.7621) &   (0.2771) &   (0.3604) &   (0.7684) &   (0.8093) &  (0.7500) &   (0.7768)&   (0.8997) &   (1.1200) \\ 
& $\Delta$ Vol & $-6.9746^{***}$ &   $-0.4479^{***}$ &   $0.000001^{\diamond}$ &   $2.7797^{***}$ &     $6.6894^{***}$ &  $10.2932^{***}$ &  $13.7541^{***}$ &   $17.1105^{***}$ &  $23.5668^{***}$ \\ 
& & (0.0488) &   (0.0178) &  (0.0229) &   (0.0487) &   (0.0510) &   (0.0464) &   (0.0463) &   (0.0503) &    (0.0555) \\   
& $\Delta$ MRI & $0.0067^{\diamond}$ &   $0.0089^{***}$ &   $0.00001^{\diamond}$ &   $0.0294^{***}$ &   $0.0509^{***}$ &   $0.0475^{***}$ &   $0.0445^{***}$ &   $0.0409^{***}$ &   $0.03808^{***}$ \\ 
& & (0.0041) &    (0.0020) &   (0.0028) &   (0.0061) &   (0.0067) &   (0.0066) &    (0.0074) &   (0.0097) &   (0.0133) \\
& $\Delta$ TR & $0.0018^{\diamond}$ &   $0.0072^{*}$ &       $-0.0^{\diamond}$ &   $0.0434^{***}$ &   $0.0379^{***}$ &   $0.0446^{***}$ &    $0.0645^{***}$ &   $0.0463^{**}$ &   $0.02992^{**}$ \\  
& & (0.0073) &   (0.0031) &    (0.0042) &   (0.0090) &   (0.0095) &   (0.0089) &   (0.0096) &   (0.0116) &   (0.0157) \\    \hline & \\[-1.5ex]
\multirow{8}{*}{10 y} & Return & $2.1100^{\diamond}$ &   $0.9802^{\diamond}$ &   $0.00002^{\diamond}$ &    $0.00001^{\diamond}$ &    $0.6532^{\diamond}$ &    $1.0444^{\diamond}$ &   $1.0633^{\diamond}$ &   $1.3617^{\diamond}$ &   $0.5832^{\diamond}$ \\  
& & (1.4818) &   (0.9560) &   (0.2908) &   (0.3208) &   (0.5810) &   (0.9268) &   (0.8033) &   (1.0227) &   (1.2039) \\  
& $\Delta$ Vol & $-16.7535^{***}$ &  $-4.0113^{***}$ &  $-0.00001^{\diamond}$ &   $0.000001^{\diamond}$ &   $1.1344^{***}$ &   $6.2161^{***}$ &  $11.1312^{***}$ &  $14.2431^{***}$ &  $13.9336^{***}$ \\ 
& & (0.0860) &   (0.0593) &    (0.0181) &   (0.0203) &   (0.0367) &   (0.0588) &   (0.0509) &   (0.0644) &   (0.0737) \\  
& $\Delta$ MRI & $0.0025^{\diamond}$ &  $-0.00169^{\diamond}$ &     $0.0^{\diamond}$ &       $0.0^{\diamond}$ &  $0.0095^{\diamond}$ &  $0.0150^{*}$ &   $-0.0061^{\diamond}$ &   $0.00495^{\diamond}$ &   $0.02904^{***}$ \\  
& & (0.0118) &   (0.0076) &    (0.0023) &   (0.0026) &   (0.0047) &   (0.0072) &   (0.0057) &   (0.0071) &   (0.0087) \\ 
& $\Delta$ TR & $0.1310^{***}$ &   $0.0802^{***}$ &   $0.000002^{\diamond}$ &   $0.000001^{\diamond}$ &   $0.0545^{***}$ &    $0.0854^{***}$ &   $0.1030^{***}$ &   $0.1046^{***}$ &    $0.0753^{***}$ \\  
& & (0.0170) &   (0.0112) &    (0.0033) &   (0.0036) &   (0.0067) &   (0.0111) &   (0.0097) &   (0.0125) &    (0.0157)  \\   \hline & \\[-1.5ex]
\multirow{8}{*}{20 y} & Return & $8.6324^{**}$ &   $6.4267^{*}$ &   $5.2899^{*}$ &   $3.9391^{*}$ &   $4.2590^{\diamond}$ &   $3.0114^{\diamond}$ &   $0.8525^{\diamond}$ &   $0.9701^{\diamond}$ &   $0.6682^{\diamond}$ \\ 
& & (4.7369) &   (3.6232) &   (2.6405) &   (2.0422) &   (2.4208) &   (2.2034) &   (1.8653) &    (1.8132) &   (1.7537) \\ 
& $\Delta$ Vol & $-70.1374^{***}$ & $-42.8053^{***}$ & $-20.0820^{***}$ &  $-2.6321^{***}$ &   $7.5991^{***}$ &  $18.7052^{***}$ &  $24.8702^{***}$ &   $22.2789^{***}$ &   $18.8805^{***}$ \\ 
& & (0.2487) &    (0.2090) &   (0.1606) &   (0.1279) &   (0.1542) &   (0.1422) &   (0.1227) &   (0.1221) &   (0.1222) \\  
& $\Delta$ MRI & $-0.0836^{*}$ &  $-0.0975^{**}$ &  $-0.0860^{***}$ &  $-0.0967^{***}$ &  $-0.1124^{***}$ &  $-0.0780^{***}$ &  $-0.0797^{***}$ &  $-0.0750^{***}$ &  $-0.0811^{***}$ \\ 
& & (0.0430) &   (0.0324) &    (0.0217) &   (0.0161) &   (0.0181) &   (0.0153) &   (0.0123) &   (0.0117) &   (0.0097) \\  
& $\Delta$ TR & $0.0636^{\diamond}$ &   $0.1007^{*}$ &   $0.1095^{***}$ &   $0.1082^{***}$ &   $0.1238^{***}$ &   $0.1548^{***}$ &   $0.1754^{***}$ &   $0.1421^{***}$ &    $0.1095^{***}$ \\ 
& & (0.0554) &   (0.0396) &   (0.0283) &   (0.0221) &   (0.0263) &   (0.0245) &   (0.0211) &   (0.0206) &   (0.0184)  \\   \hline & \\[-1.5ex]
\multirow{8}{*}{30 y} & Return & $21.4858^{**}$ &   $7.7933^{*}$ &  $11.2974^{**}$ &   $8.8951^{**}$ &   $6.132^{**}$ &   $7.9725^{***}$ &   $3.1319^{\diamond}$ &   $0.0821^{\diamond}$ &   $-2.0407^{\diamond}$ \\ 
& & (7.3755) &   (4.5123) &     (3.8768) &   (2.5783) &   (2.1219) &   (2.3578) &   (2.4674) &   (1.9346) &   (1.7070) \\ 
& $\Delta$ Vol & $-98.2669^{***}$ &  $-64.1962^{***}$ & $-36.5270^{***}$ & $-16.8234^{***}$ &  $-2.6711^{***}$ &   $7.5035^{***}$ &  $17.3678^{***}$ &  $19.5688^{***}$ &  $21.5032^{***}$ \\
& & (0.3890) &   (0.2621) &   (0.2349) &   (0.1616) &   (0.1361) &   (0.1537) &   (0.1630) &   (0.1315) &    (0.1183) \\   
& $\Delta$ MRI & $-0.0493^{\diamond}$ &  $-0.1026^{**}$ &  $-0.1088^{***}$ &   $-0.1384^{***}$  &   $-0.1467^{***}$  &  $-0.1468^{***}$  &  $-0.1270^{***}$  &  $-0.1345^{***}$  &  $-0.1319^{***}$  \\ 
& & (0.0595) &   (0.0364) &   (0.0297) &   (0.0191) &   (0.0154) &   (0.0164) &   (0.0171) &   (0.0138) &   (0.0103) \\ 
& $\Delta$ TR & $0.0582^{\diamond}$ &    $0.0734^{\diamond}$ &    $0.1108^{**}$ &   $0.1449^{***}$ &   $0.1321^{***}$ &   $0.1523^{***}$ &    $0.1311^{***}$ &   $0.1001^{***}$ &   $0.1000^{***}$ \\  
& & (0.0808) &   (0.0473) &   (0.0403) &   (0.0267) &    (0.0222) &   (0.0253) &   (0.0271) &   (0.0212) &   (0.0165) \\ 
\hline
\end{tabular}
}
\begin{tablenotes}
      \small
      \item $^{***} p < 0.001$, $^{**} p < 0.01$, $^{*} p < 0.1$, $^{\diamond} \  0.1<p< 1 $
    \end{tablenotes}
    \end{adjustwidth}
    \caption{This table presents the results of the panel regression for the \textit{diffusion} model, where the dependent variable is first-difference model implied CDS spread ($\Delta \mathcal{\mbox{CDS}}_{i,t}^{model}$) computed between two consecutive days. The columns contains the estimates for the quantiles $\tau \in \{ 0.1,...,0.9 \}$. Standard errors are reported between brackets ().}
    \label{tab:diffusion-deltaCDSmodel}
\end{table}

\newpage
\hspace{1in}
\begin{table}[H]
 \renewcommand{\arraystretch}{0.5}
 \begin{adjustwidth}{-0.8cm}{}
 \scalebox{0.73}{
\begin{tabular}{l|l|ccccccccc} \hline & \\[-1.5ex]
& & 0.1 &        0.2 &       0.3 &     0.4 &     0.5 &      0.6 &      0.7 &      0.8 &       0.9 \\ \hline  \hline & \\[-1.5ex] 
\multirow{6}{*}{6 m} & \mbox{Return} & $-0.0054^{***}$ &  $-0.0016^{***}$ &   $-0.0009^{***}$ &    $-0.0008^{***}$ & $-0.0006^{***}$ &  $-0.0006^{***}$ &  $-0.0008^{***}$ &  $-0.0014^{***}$ &    $-0.0072^{***}$ \\ 
& & (0.0004) &   (0.00008) &   (0.00004) &   (0.00003) &  (0.00002) &   (0.00002) &   (0.00003) &    (0.00007) &   (0.0006) \\  
& Vol & $-0.0035^{***}$ &  $-0.0015^{***}$ &   $-0.0009^{***}$ &  $-0.0006^{***}$ & $-0.0002^{***}$ &  $-0.0001^{***}$ &  $-0.00008^{***}$ &  $-0.00002^{***}$ &  $0.0014^{***}$ \\  
& & (0.000038) &   (0.000008) &   (0.000004) &   (0.000003) &  (0.000003) &   (0.000002) &   (0.000003) &   (0.000007) &   (0.000058) \\ 
& TR & $-0.000004^{***}$ &  $-0.000007^{***}$ &  $-0.000006^{***}$ &  $-0.000002^{***}$ &  $0.000001^{***}$ &   $0.000004^{***}$ &   $0.000012^{***}$ &    $0.00002^{***}$ &   $0.000031^{***}$ \\ 
& & (0.0) &        (0.0) &        (0.0) &        (0.0) &       (0.0) &        (0.0) &        (0.0) &        (0.0) &   (0.0)  \\   \hline & \\[-1.5ex]
\multirow{6}{*}{1 y} &  \mbox{Return} & $-1.6624^{***}$ &  $-0.2780^{***}$ &  $-0.0817^{***}$ &  $-0.0295^{***}$ & $-0.0142^{***}$ &  $-0.0127^{***}$ &   $-0.0247^{***}$ &  $-0.1075^{***}$ &  $-1.0267^{***}$ \\ 
& & (0.2130) &   (0.0135) &   (0.0017) &    (0.0003) &  (0.0001) &   (0.0001) &   (0.0002) &   (0.0019) &   (0.0617) \\   
& Vol & $-0.5075^{***}$ &  $-0.0599^{***}$ &  $-0.0124^{***}$ &  $-0.0035^{***}$ & $-0.0012^{***}$ &  $-0.0004^{***}$ &   $0.0004^{***}$ &   $0.0070^{***}$ &    $0.13367^{***}$ \\ 
& & (0.0192) &   (0.0013) &   (0.0001) &   (0.00003) &  (0.00001) &   (0.00001) &  (0.00002) &   (0.0001) &   (0.0050) \\    
& TR & $0.0012^{*}$ &   $0.0003^{***}$ &   $0.00009^{***}$ &   $0.00003^{***}$ &  $0.00001^{***}$ &   $0.00001^{***}$ &   $0.00001^{***}$ &   $0.00004^{***}$ &   $0.00061^{***}$ \\  
& & (0.0005) &   (0.00003) &   (0.000005) &   (0.000001) &       (0.0) &        (0.0) &   (0.000001) &   (0.000005) &   (0.0001) \\   \hline & \\[-1.5ex]
\multirow{6}{*}{2 y} & \mbox{Return}  & $-13.8329^{***}$ &  $-4.9175^{***}$ &  $-2.5092^{***}$ &  $-1.4395^{***}$ & $-0.6667^{***}$ &  $-0.7363^{***}$ &  $-1.8540^{***}$ &  $-4.6526^{***}$ & $-15.0207^{***}$ \\ 
& & (0.8864) &    (0.1217) &   (0.0396) &   (0.01622) &  (0.0059) &   (0.0062) &   (0.0197) &   (0.0806) &    (0.5781) \\ 
& Vol & $-3.3682^{***}$ &  $-0.8664^{***}$ &  $-0.3742^{***}$ &  $-0.1533^{***}$ & $-0.0378^{***}$ &  $-0.0106^{***}$ &   $0.0750^{***}$&   $0.3574^{***}$ &   $1.6894^{***}$ \\ 
& & (0.0935) &   (0.0137) &   (0.0045) &   (0.0018) &  (0.0006) &   (0.0006) &   (0.0020) &   (0.0081) &   (0.0544) \\   
& TR & $0.0042^{*}$ &   $0.0027^{***}$ &   $0.0018^{***}$ &   $0.0009^{***}$ &  $0.0003^{***}$ &   $0.0003^{***}$ &    $0.0004^{***}$ &   $0.0009^{***}$ &   $0.0058^{***}$ \\  
& & (0.0017) &   (0.0002) &   (0.00008) &   (0.00003) &  (0.00001) &   (0.00001) &  (0.00003) &   (0.0001) &   (0.0010) \\   \hline & \\[-1.5ex]
\multirow{6}{*}{3 y} & \mbox{Return} & $-26.2692^{***}$  & $-11.8761^{***}$  &  $-7.0762^{***}$  &  $-4.2664^{***}$  & $-1.7042^{***}$  &  $-2.2444^{***}$  &  $-6.3563^{***}$  &  $-13.2778^{***}$  & $-33.4054^{***}$  \\
& & (1.2244)5 &   (0.242857 &   (0.1011) &   (0.0464) &  (0.0155) &   (0.0193) &   (0.0707) &   (0.2090) &    (1.0716) \\    
& Vol & $-6.4163^{***}$  &  $-1.9866^{***}$  &  $-0.9455^{***}$  &  $-0.4107^{***}$  &   $-0.0745^{\diamond}$  &   $0.0025^{***}$  &   $0.2890^{***}$  &   $0.9586^{***}$  &   $3.7684^{***}$  \\
& & (0.1444) &   (0.0300) &   (0.0126) &   (0.0058) &   (0.0019) &   (0.0023) &   (0.0082) &   (0.0231) &    (0.1127) \\ 
& TR & $0.0059^{**}$  &   $0.00417^{***}$  &   $0.0025^{***}$  &   $0.0015^{***}$  &  $0.0003^{***}$  &    $0.0006^{***}$  &   $0.0016^{***}$  &   $0.00424^{***}$  &   $0.0109^{***}$  \\ 
& & (0.0019) &   (0.0004) &    (0.0001) &   (0.00007) &  (0.00002) &   (0.00003) &   (0.0001) &   (0.0003) &   (0.0015)  \\   \hline & \\[-1.5ex]
\multirow{6}{*}{4 y} & \mbox{Return} & $-36.10705^{***}$ &  $-18.69365^{***}$ & $-11.91755^{***}$ &  $-6.90235^{***}$ & $-2.20075^{***}$ &  $-4.97775^{***}$ & $-12.314395^{***}$ & $-23.40855^{***}$ & $-52.07375^{***}$ \\ 
& & (1.4033) &   (0.3598) &   (0.1634) &   (0.0728) &  (0.0212) &   (0.0440) &   (0.1442) &   (0.3618) &   (1.6890) \\
& Vol & $-9.51225^{***}$ &  $-3.51745^{***}$ &  $-1.60055^{***}$ &  $-0.60315^{***}$ & $-0.05025^{***}$ &   $0.2068^{***}$ &   $1.0076^{***}$ &   $2.40285^{***}$ &   $7.50855^{***}$ \\
& & (0.1708) &   (0.0474) &   (0.0220) &   (0.0100) &  (0.0029) &   (0.0059) &   (0.0185) &   (0.0435) &    (0.1905) \\ 
& TR & $-0.000045^{\diamond}$ &    $0.0027^{***}$ &   $0.0012^{***}$&   $0.0011^{***}$ &  $0.0001^{\diamond}$ &   $0.00007^{***}$ &   $0.0010^{***}$ &   $0.0054^{***}$&   $0.0174^{***}$ \\  
& & (0.0017) &   (0.0004) &   (0.0002) &   (0.0001) &   (0.00003) &   (0.00006) &   (0.00018) &   (0.0004) &   (0.0019)  \\   \hline & \\[-1.5ex]
\multirow{6}{*}{5 y} & \mbox{Return} & $-42.2594^{***}$ & $-23.1555^{***}$ & $-14.8026^{***}$ &  $-8.5493^{***}$ & $-2.6291^{***}$ &  $-6.8665^{***}$ & $-16.6834^{***}$ & $-30.2765^{***}$ & $-61.5915^{***}$ \\
& & (1.5278) &   (0.4335) &   (0.2051) &   (0.0908) &  (0.02554) &   (0.06133) &   (0.1973) &    (0.4869) &   (2.0636) \\  
& Vol & $-11.4517^{***}$ &  $-4.5529^{***}$ &  $-2.0271^{***}$ &  $-0.7699^{***}$ & $-0.0333^{***}$ &   $0.4234^{***}$ &   $1.6928^{***}$ &   $3.6042^{***}$ &   $9.6731^{***}$ \\ 
& & (0.1816) &   (0.0573) &   (0.0283) &   (0.0129) &  (0.0036) &    (0.0085) &    (0.0260) &   (0.0589) &   (0.2279) \\   
& TR & $-0.0104^{***}$ &  $-0.0016^{***}$ &  $-0.0009^{***}$ &   $0.0011^{\diamond}$ &  $0.00002^{***}$ &  $-0.0005^{*}$ &    $0.00036^{***}$ &   $0.0071^{***}$ &   $0.0268^{***}$ \\  
& & (0.0014) &   (0.0004) &    (0.0002) &   (0.0001) &   (0.00003) &    (0.00007) &   (0.0002) &   (0.0004) &   (0.0018)  \\   \hline & \\[-1.5ex]
\multirow{6}{*}{7 y} &  \mbox{Return}  & $-46.8499^{***}$ & $-26.1987^{***}$ & $-17.4813^{***}$ & $-10.2383^{***}$ &  $-3.8535^{***}$ &  $-9.4837^{***}$ & $-20.5886^{***}$ & $-36.3619^{***}$ & $-70.0557^{***}$ \\ 
& & (1.7590) &   (0.4692) &   (0.2295) &   (0.1094) &  (0.0353) &    (0.0865) &    (0.2504) &   (0.6049) &   (2.4837) \\  
& Vol & $-15.2753^{***}$ &  $-5.8302^{***}$ &  $-2.5154^{***}$ &  $-0.9282^{***}$ & $-0.0007^{\diamond}$ &   $0.8064^{***}$ &   $2.5254^{***}$ &   $5.7199^{***}$ &  $13.2415^{***}$ \\ 
& & (0.2109) &   (0.0648) &   (0.0334) &   (0.0165) &  (0.0054) &   (0.0128) &   (0.0348) &   (0.0762) &   (0.2800) \\ 
& TR & $-0.0135^{***}$ &  $-0.0050^{***}$ &  $-0.0029^{***}$ &   $0.0006^{***}$ & $-0.000004^{\diamond}$ &  $-0.0010^{***}$ &   $0.0004^{*}$ &   $0.00613^{***}$ &   $0.0344^{***}$ \\
& & (0.0012) & 	(0.0003) & (0.0002) & 	(0.00009) & (0.00003) & (0.00009) &  	(0.0002) &	(0.0004) &	(0.0016) \\   \hline & \\[-1.5ex]
\multirow{6}{*}{10 y} & \mbox{Return} & $-43.8408^{***}$ & $-25.1945^{***}$ & $-16.6703^{***}$ &  $-9.3483^{***}$ & $-5.2453^{***}$ & $-10.2858^{***}$ & $-20.1126^{***}$ & $-35.7491^{***}$ & $-67.8181^{***}$ \\
& & (1.7411) &   (0.4505) &   (0.2193) &   (0.0950) &  (0.0478) &   (0.0951) &   (0.2432) &   (0.6098) &   (2.3496) \\  
& Vol & $-17.2013^{***}$ &  $-5.5618^{***}$ &  $-2.1117^{***}$ &  $-0.7263^{***}$ &   $0.0654^{***}$ &   $0.9383^{***}$ &   $2.4969^{***}$ &   $6.2195^{***}$ &  $14.2604^{***}$ \\ 
& & (0.2135) &   (0.0642) &   (0.0332) &   (0.0148) &  (0.0076) &   (0.0146) &   (0.0350) &   (0.0793) &   (0.2708) \\ 
& TR & $-0.0053^{***}$ &  $-0.0053^{***}$ &  $-0.0036^{***}$ &   $0.00001^{\diamond}$ & $-0.00007^{*}$ &  $-0.0007^{***}$ &   $0.0013^{***}$ &   $0.0048^{***}$ &   $0.0314^{***}$ \\ 
& & (0.0011) &   (0.0003) &   (0.0001) &   (0.00007) &   (0.00004) &   (0.00007) &   (0.0001) &   (0.0004) &   (0.0014) \\   \hline & \\[-1.5ex]
\multirow{6}{*}{20 y} & \mbox{Return} & $-37.1418^{***}$ & $-21.2232^{***}$ & $-13.2271^{***}$ &  $-8.0651^{***}$ & $-6.0049^{***}$ &  $-9.4217^{***}$ & $-16.3020^{***}$ & $-30.2802^{***}$ & $-57.6025^{***}$ \\  
& & (1.5544) &   (0.4114) &    (0.1789) &   (0.0862) &  (0.0575) &    (0.0942) &   (0.1954) &   (0.5471) &   (2.2594) \\ 
& Vol & $-18.1273^{***}$ &  $-5.7746^{***}$ &  $-1.8715^{***}$ &  $-0.5699^{***}$ &  $0.1209^{***}$ &    $0.8852^{***}$ &  $2.4357^{***}$ &   $6.9969^{***}$ &  $18.9665^{***}$ \\ 
& & (0.1916) &   (0.0599) &   (0.0274) &   (0.0135) &  (0.0091) &   (0.0146) &   (0.0284) &   0.(0727) &   (0.2568) \\ 
& TR & $0.0024^{*}$ &  $-0.0022^{***}$ &  $-0.0029^{***}$ &  $-0.0007^{***}$ &  $0.00002^{\diamond}$ &   $0.0005^{***}$ &   $0.0018^{***}$ &    $0.0003^{***}$ &   $0.0064^{***}$ \\ 
& & (0.0009) &   (0.0002) &   (0.0001) &   (0.00006) &  (0.00004) &   (0.00007) &   (0.0001) &   (0.0003) &   (0.0012) \\   \hline & \\[-1.5ex]
\multirow{6}{*}{30 y} & \mbox{Return} & $-32.9638^{***}$ & $-18.0532^{***}$ & $-11.3636^{***}$ &  $-7.0519^{***}$ & $-5.4903^{***}$ &  $-8.1607^{***}$ & $-13.4919^{***}$ & $-25.3438^{***}$ & $-51.6025^{***}$ \\
& & (1.4494) &   (0.3407) &   (0.1573) &   (0.0767) &  (0.0562) &   (0.0861) &   (0.1649) &   (0.4794) &   (1.9147) \\  
 & Vol & $-19.0081^{***}$ &  $-5.2461^{***}$ &  $-1.8281^{***}$ &  $-0.4864^{***}$ &  $0.0826^{***}$ &   $0.6800^{***}$ &   $1.8968^{***}$ &   $6.1094^{***}$ &  $19.3046^{***}$ \\
& & (0.1786) &    (0.04976) &    (0.02432) &   (0.0121) &  (0.0089) &   (0.0133) &   (0.0240) &   (0.0638) &   (0.2241) \\  
& TR & $0.0115^{***}$ &  $-0.0013^{***}$ &  $-0.0021^{***}$ &  $-0.0009^{***}$ &  $0.0001^{***}$ &   $0.0013^{***}$ &   $0.0027^{***}$ &   $0.0005^{*}$ &  $-0.0030^{***}$ \\  
& & (0.0008) &   (0.0002) &   (0.0001) &   (0.00005) &  (0.00004) &   (0.00006) &   (0.0001) &   (0.0003) &   (0.0010) \\ 
\hline
\end{tabular}
}
\begin{tablenotes}
\item $^{***} p < 0.001$, $^{**} p < 0.01$, $^{*} p < 0.1$, $^{\diamond} \  0.1<p< 1 $
\end{tablenotes}
    \end{adjustwidth}
    \caption{This table presents the results of the panel regression for the \textit{diffusion} model, where the dependent variable is the error $\mathcal{E}_{i,t}$ between model and market CDS. The columns contains the estimates for the quantiles $\tau \in \{ 0.1,...,0.9 \} $. Standard errors are reported between brackets ().}
    \label{tab:diffusion-errorcalib}
\end{table}

\restoregeometry

\newpage
\section{Descriptive Statistics} \label{Appendix:statistics}
This appendix reports the descriptive statistics for the variables we employed in the analysis.
\\
\begin{table}[H]
\centering
 \scalebox{0.8}{
\begin{tabular}{llrrrrrrr}
\hline
\hline
Variable             & observations &     mean &   median &     std &        skew &  kurtosis \\
\hline
 CDS market 6m       &       105756 &  21.9188 &   8.3700 & 73.4032 &   12.0192 &  180.7346 \\ 
 CDS market 1y       &       105756 &  26.8150 &  11.1900 & 81.4541 &   11.4796 &  171.6312 \\ 
 CDS market 2y       &       105756 &  38.4486 &  20.1100 & 87.1869 &     9.4004 &  115.8410 \\ 
 CDS market 3y       &       105756 &  50.1026 &  29.6000 & 91.0793 &    7.9210 &   83.6267 \\
 CDS market 4y       &       105756 &  64.0296 &  42.3700 & 92.5535 &   6.7104 &   60.9326 \\
 CDS market 5y       &       105756 &  77.8766 &  55.3400 & 95.4347 &    5.8134 &   46.9894 \\ 
 CDS market 7y       &       105756 &  99.9479 &  76.6000 & 96.8531 &   4.8107 &   33.5819 \\
 CDS market 10y      &       105756 & 115.5844 &  91.5950 & 96.3777 &   4.2664 &   27.5051 \\
 CDS market 20y      &       105756 & 125.1668 & 102.4250 & 95.7793 &   3.8413 &   23.0704 \\
 CDS market 30y      &       105756 & 130.6239 & 107.5200 & 95.8474 &   3.6567 &   21.3859 \\        \hline     
 Ret                 &       105756 &   0.0001 &   0.0003 &  0.0186 &   -0.5487 &   20.8007 \\     
 Vol                 &       105756 &   0.2682 &   0.2365 &  0.1165 &   1.9622 &    6.3121 \\  
 Carbon price        &         1259 &  24.9186 &  23.4000 & 17.0163 &   1.1675 &    1.1326 \\  
 TR 6m &                       1259 &   6.9138 &   4.2444 &  6.9662 &   3.0766 &   12.6582 \\      
 TR 1y &                       1259 &   8.4419 &   5.7428 &  8.0094 &   3.1992 &   13.6579 \\     
 TR 2y &                       1259 &  13.2221 &  11.4281 &  9.7826 &   2.8476 &   11.2722 \\     
 TR 3y &                       1259 &  18.9923 &  18.2223 & 11.2428 &   2.6586 &   10.5240 \\       
 TR 4y &                       1259 &  26.5263 &  25.3949 & 11.3736 &   2.5650 &   10.3813 \\      
 TR 5y &                       1259 &  34.1395 &  33.0266 & 11.6649 &   2.4313 &    9.5039 \\       
 TR 7y &                       1259 &  47.3614 &  46.2781 & 11.3790 &   2.2279 &    8.4454 \\    
 TR 10y &                      1259 &  54.7323 &  54.2707 & 11.1107 &   2.0062 &    7.6375 \\   
 TR 20y &                      1259 &  58.0126 &  57.5406 & 10.9802 &   1.9966 &    7.6018 \\     
 TR 30y &                      1259 &  60.0840 &  59.6963 & 11.0472 &   1.7956 &    6.9918 \\     
 MRI 6m &                    105756 &  11.2854 &   7.7400 & 12.6435 &    4.5012 &   22.3518 \\      
 MRI 1y &                    105756 &  14.5828 &  10.4800 & 15.3598 &    4.4879 &   23.0778 \\      
 MRI 2y &                    105756 &  24.2714 &  18.9400 & 21.2326 &    3.6527 &   16.7962 \\      
 MRI 3y &                    105756 &  34.5308 &  28.3100 & 27.1528 &    3.0949 &   13.5697 \\     
 MRI 4y &                    105756 &  47.9410 &  42.0400 & 33.1248 &    2.5021 &    8.6156 \\     
 MRI 5y &                    105756 &  61.4777 &  54.5300 & 39.2497 &   2.1292 &    5.9447 \\   
 MRI 7y &                    105756 &  84.6142 &  72.9750 & 46.4385 &   1.7703 &    3.8712 \\    
 MRI 10y &                   105756 & 101.1261 &  87.6500 & 51.0265 &   1.5869 &    3.0508 \\   
 MRI 20y &                   105756 & 109.7678 &  95.9250 & 53.6013 &   1.5815 &    3.0196 \\   
 MRI 30y &                   105756 & 114.9311 &  98.7400 & 54.3299 &   1.5657 &    3.0108 \\
\hline \hline
    \end{tabular} }
    \caption{This table reports some statistics for the CDS market spreads and control variables. Transition risk has been computed by the Wasserstein distance.}
    \label{tab:descriptive_stats}
\end{table} 
\noindent
The tables below report the correlation between the change in transition risk at different maturities $\Delta \mbox{TR}_t^m$ (computed via the Wasserstein distance \ref{tab:correlation_CP_weiss} and by the median CDS spread \ref{tab:correlation_CP_med}), and the daily first-difference of the Carbon Price. 
\begin{table}[H]
\centering
 \scalebox{0.8}{
\begin{tabular}{l|llllllllll} \hline
& $\Delta$ TR 6m & $\Delta$ TR 1y & $\Delta$ TR 2y & $\Delta$ TR 3y & $\Delta$ TR 4y & $\Delta$ TR 5y & $\Delta$ TR 7y & $\Delta$ TR 10y & $\Delta$ TR 20y & $\Delta$ TR 6m \\  \hline
$\Delta \mbox{CP}_t$  & -0.0939        & -0.0789        & -0.1067        & -0.1200        & -0.1223        & -0.1170        & -0.1171        & -0.1144         & -0.1079         & -0.1041           \\ \hline 
\end{tabular}}
\caption{Correlation between $\Delta \mbox{CP}_t$ and $\Delta \mbox{TR}_t$, computed by the Wasserstein distance.}
\label{tab:correlation_CP_weiss}
\end{table}

\begin{table}[H]
\centering
 \scalebox{0.8}{
\begin{tabular}{l|llllllllll} \hline
& $\Delta$ TR 6m & $\Delta$ TR 1y & $\Delta$ TR 2y & $\Delta$ TR 3y & $\Delta$ TR 4y & $\Delta$ TR 5y & $\Delta$ TR 7y & $\Delta$ TR 10y & $\Delta$ TR 20y & $\Delta$ TR 6m \\  \hline
$\Delta \mbox{CP}_t$ &    -0.0875     & -0.0857       & -0.1031        & -0.0770        & -0.0650        & -0.0879       & -0.1038        & -0.1036         & -0.0982        & -0.0985           \\ \hline 
\end{tabular}}
\caption{Correlation between $\Delta \mbox{CP}_t$ and $\Delta \mbox{TR}_t$, computed by median CDS spread.}
\label{tab:correlation_CP_med}
\end{table}

\end{appendices}
\bibliography{ArXiv_LivieriSmaniottoRadi.bib}

\end{document}